\spnewtheorem{fct}{Fact}{\bf}{\it}
\let\doendproof\endproof
\renewcommand\endproof{~\hfill\qed\doendproof}
\titlerunning{   }
\authorrunning{ }
{\bf}{\it}
\begin{document}


\titlerunning{Bar 1-Visibility Drawings of 1-Planar Graphs}
\authorrunning{S. Sultana {\it et al.}}
\title{Bar 1-Visibility Drawings of 1-Planar Graphs}

\author{
Shaheena\ Sultana \and Md.\ Saidur Rahman  \and 
Arpita Roy\and 
Suraiya Tairin 
} 

\institute{
Graph Drawing and Information Visualization Laboratory,\\
Department of Computer Science and Engineering,\\
Bangladesh University of Engineering and Technology\\
\email{ shaheenaasbd@yahoo.com, saidurrahman@cse.buet.ac.bd, arpita116@yahoo.com, suraiya\_pakhi@yahoo.com}
}

\pagenumbering{arabic}                                                                                                                                                                                                         
\maketitle

\begin{abstract}
A bar $1$-visibility drawing of a graph $G$ is a drawing of $G$ where each vertex is drawn as a horizontal line segment called a bar, each edge is drawn as a vertical line segment where the vertical line segment representing an edge must connect the horizontal line segments representing the end vertices and a vertical line segment corresponding to an edge intersects at most one bar which is not an end point of the edge. A graph $G$ is  bar 1-visible if $G$ has a bar 1-visibility drawing. A graph $G$ is 1-planar if $G$ has a drawing in a 2-dimensional plane such that an edge crosses at most one other edge.  In this paper we give linear-time algorithms to find bar 1-visibility drawings of diagonal grid graphs and maximal outer 1-planar graphs. We also show that recursive quadrangle 1-planar graphs and pseudo double wheel 1-planar graphs are bar $1$-visible graphs.


\end{abstract}                                                       

\section{Introduction}
\label{Introduction} 
A {\it $1$-planar drawing} of a graph $G$ is a drawing of $G$ on a two dimensional plane where an edge can be crossed by at most another edge. A graph $G$ is {\it $1$-planar} if $G$ has a 1-planar drawing. A {\it straight-line drawing} of a graph $G$ is a drawing of $G$ such that every edge of $G$ is drawn as a straight-line segment. A {\it right angle crossing drawing} or {\it RAC drawing} is a straight-line drawing where any two crossing edges form right angles at their intersection
point. A {\it RAC graph} is a graph that has a RAC drawing. A {\it bar $1$-visibility drawing} of a graph $G$ is a drawing of $G$ where each vertex is drawn as a horizontal line segment called a bar, each edge is drawn as a vertical line segment where the vertical line segment representing an edge must connect the horizontal line segments representing the end vertices and a vertical line segment corresponding to an edge intersects at most one bar which is not an end point of the edge. A graph $G$ is a bar {\it $1$-visible} if $G$ has a bar 1-visibility drawing. A bar $1$-visible graph and a bar $1$-visibility drawing of the same graph is shown in Figures~\ref{figure:bvg1} (a), and (b), respectively. 

\begin{figure}[!h]
\centering
\includegraphics[width=0.7\textwidth]{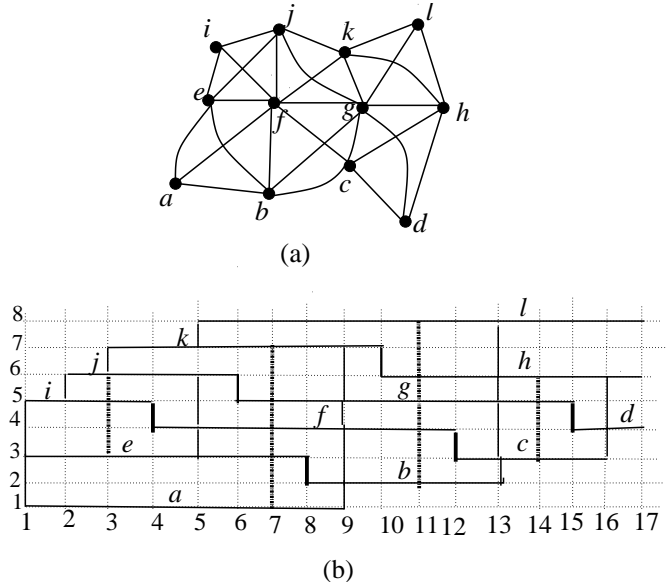}
\caption{(a) A  bar $1$-visible graph, and (b) a bar 1-visibility drawing of the same graph.} 
\label{figure:bvg1}
\end{figure}

Recently 1-planar graphs and RAC graphs have been extensively studied \cite{AFKMT10,DEL10,FM07,S10}. 
Pach and Toth proved that 1-planar graphs with $n$ vertices have at most $4n-8$ edges \cite{PT97}, whereas Didimo {\it et al.} 
showed that a RAC graph with $n>3$ vertices has at most $4n-10$ edges \cite{DEL09}. 
Recognizing the both classes of graphs are 
NP-hard \cite{ABS11,KM08}. Recently Eades and Liotta studied the relationship between dense RAC graphs and dense 1-planar graphs \cite{EL12}.

The concept of bar visibility
drawing came up in the early 1980s when many new problems in visibility theory
arose, originally inspired by applications dealing with determining visibilities
between different electrical components in VLSI design. Other applications
arise when large graphs are to be displayed in a transparent way, and in the
rapidly developing field of computer graphics. A \textit {bar visibility drawing }of a planar graph $G$ is a drawing of $G$ where each vertex is drawn as a horizontal line segment and each edge is drawn as a vertical line segment where the vertical line segment representing an edge must connect the horizontal line segments representing the end vertices.
Otten and Wijk~\cite{OV78} have shown that every planar graph admits a visibility drawing, and  Tamassia and Tollis~\cite{TT86} have given a linear-time algorithm for constructing a visibility drawing of a planar graph.  Dean {\it et al.} have introduced a generalization of visibility drawing for a non-planar graph which is called bar $k$-visibility drawing~\cite{DEGLST07}. In a bar $k$-visibility drawing of a graph a horizontal line corresponding to a vertex is called a {\it bar}, and the vertical line segment corresponding to an edge intersects at most $k$ bars which are not end points of the edge. Thus a visibility drawing is a bar $k$-visibility drawing for $k = 0$.

In this paper we give  linear-time algorithms to find  bar 1-visibility drawings of  diagonal grid graphs and maximal outer 1-plane graphs which are RAC drawable graphs. We also develop algorithms for finding bar 1-visibility drawings of recursive quadrangle 1-planar graphs and pseudo double wheel 1-planar graphs which are not RAC graphs.

The rest of the paper is organized as follows.
Section~\ref{preliminaries} describes some of the definitions that we have used in our paper.
Sections~\ref{RACPB} deals with linear algorithms for finding bar 1-visibility drawings of diagonal grid graphs and maximal outer 1-plane graphs and  Section~\ref{planar and bar} deals with algorithms for finding bar 1-visibility drawings of recursive quadrangle 1-planar graphs and pseudo double wheel 1-planar graphs. 
  Finally, Section~\ref{conclusion}  concludes our paper with a list of open problems. 

\section{Preliminaries}
\label{preliminaries}
 
In this section we introduce some terminologies and  definitions which will be used throughout the paper. 
 For the graph theoretic definitions which have not been described here, see~\cite{DETT99,NR04}.
 
A graph is {\it planar} if it can be embedded in the plane without edge crossing
except at the vertices where the edges are incident. A {\it plane graph} is a
planar graph with a fixed planar embedding. A plane graph divides the plane 
into some connected regions called the {\it faces}. The unbounded region is
called the {\it outer face} and all the other faces are called the {\it inner faces}.
The vertices on the outer face are called the {\it outer vertices} and all
 the other vertices are called {\it inner vertices}.

A \textit{$p \times q$-grid graph} is a graph whose vertices correspond to the grid points of a $p \times q$-grid in the plane and edges correspond to the grid lines between two consecutive grid points. 
\textit{A diagonal grid graph}  $G_{p,q}$ is a $p\times q$-grid graph with diagonal edges are introduced in each cell.
Figure~\ref{figure:Diagonal Grid Graph}(a) shows a $p\times q$-grid graph and Figure~\ref{figure:Diagonal Grid Graph}(b) shows a diagonal grid graph $G_{p,q}$. 
Let $abcd$ be a cell of a diagonal grid graph as illustrated in Figure~\ref{figure:Diagonal Grid Graph}(c), where $a$ is the bottom-left vertex, $b$ is the bottom-right vertex, $c$ is
the up-right vertex and $d$ is the up-left vertex. We call the edge $(a,c)$ the {\it right-diagonal edge} and the
edge $(b,d)$ the {\it left-diagonal edge} of the cell $abcd$.

\begin{figure}[!htbp]
\centering
\includegraphics[width=0.7\textwidth]{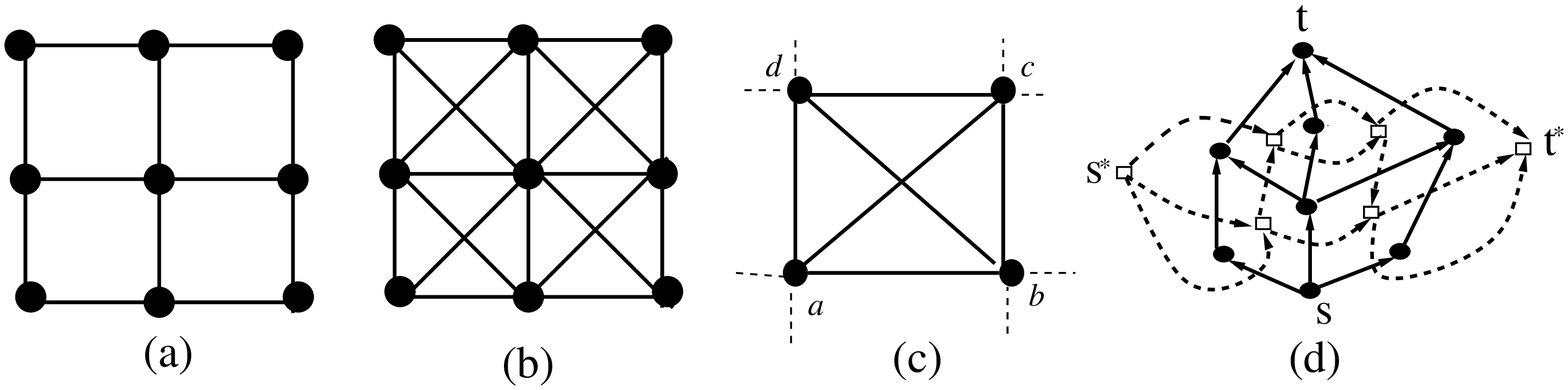}
\caption{(a) Grid graph, (b) diagonal grid graph, (c) one cell of a diagonal grid graph and (d) constructing the dual graph $G^{\ast}$ from planar $st$-graph.}
\label{figure:Diagonal Grid Graph}
\end{figure}

An acyclic digraph with a single source $s$ and a single sink $t$ is called an {\it $st$-graph.} A 
{\it plane $st$-graph} is an $st$-graph that is planar and embedded with vertices $s$ and $t$ on 
the boundary of the outer face. 
Let  $G$ be a plane $st$-graph and $F$ be its set of faces. $F$ contains two representatives
of the outer face: the ”left outer face” $s^{\ast}$, which is incident with the edges on the
left boundary of $G$ and the ”right outer face” $t^{\ast}$, which is incident with the edges on
the right boundary of $G$. Additionally, for each $e =(u, v)$  we define $orig(e) = u$ and
$dest(e) = v$. Also, we define $left(e)$ (respectively $right(e)$) to be the face to the left
(respectively right) of $e$. We now define a dual-like graph $G^*$ of $G$ as follows.  The vertex set of
$G^*$ is the set $F$ of faces of $G$, and $G^*$ has an edge $e^*=(f,g)$ for each edge $e\not= (st)$ of $G$, where 
$f=left(e)$ and $g=right(e)$.
In Figure~\ref{figure:Diagonal Grid Graph}(d) the vertices and edges of $G$ are drawn by black circles and solid lines respectively, and
the vertices and edges of $G^*$ are drawn by white rectangles and dotted lines respectively. 


Let $G$ be a digraph with $n$ vertices and $m$ edges. A {\it topological numbering} of $G$ is an assignment of numbers to 
the vertices of $G$, such that, for every edge $(u,v)$ of $G$, the number assigned to $v$ is greater  than the one assigned 
to $u$. If the edges of digraph $G$ have nonnegative weights associated with them, a {\it weighted topological numbering}
is a topological numbering of $G$, such that, for every edge $(u,v)$ of $G$, the number assigned to $v$ is greater than or
equal to the number assigned to $u$ plus the weight of $(u,v)$. The numbering is {\it optimal} if the range of numbers assigned to the vertices is minimized.

Let $G$ be a planar $st$-graph with $n$ vertices. Two paths $\pi_1$ and $\pi_2$ of $G$ are said to be
non-intersecting if they are edge disjoint and do not cross at common vertices but they can touch at vertices.
Let $\Pi$ be a collection
of non-intersecting paths of $G$. In the visibility drawing of $G$, for every path $\pi$ of $\Pi$, if the edges of $\pi$
are vertically aligned then the drawing is called {\it constrained visibility drawing}.
The following result on constrained visibility drawing is known~\cite{DTT92}.

\begin{lemma}
\label{lemma:eq1}
Let $G$ be a planar $st$-graph with $n$ vertices, and let $\Pi$ be a set of non-intersecting
paths  of $G$. Then one can find a constrained visibility drawing of $G$ in $O(n)$ time with $O(n^2)$ area, where the edges of
every path $\pi$ in $\Pi$ are vertically aligned. 
\end{lemma}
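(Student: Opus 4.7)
The plan is to adapt the classical Tamassia--Tollis visibility drawing algorithm for planar $st$-graphs. In the unconstrained case, one obtains the $y$-coordinate of each vertex $v$ from a weighted topological numbering of $G$ (with weight $1$ on every edge), and the $x$-coordinate of each edge $e$ from a weighted topological numbering of the dual graph $G^*$ described in the preliminaries. Because $G$ is planar, both $G$ and $G^*$ have $O(n)$ vertices and edges, each topological numbering can be computed in linear time, and the ranges of both numberings are $O(n)$, which yields the advertised $O(n)$ running time and $O(n^2)$ area for the final drawing.

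To enforce that the edges of each path $\pi \in \Pi$ are vertically aligned, I would modify $G^*$ so that consecutive edges of $\pi$ are forced into the same vertical track. Concretely, for every two consecutive edges $e_1 = (u,v)$, $e_2 = (v,w)$ of $\pi$, I would identify the appropriate pair of dual faces (or equivalently, add weight-$0$ arcs in $G^*$ between $left(e_1)$ and $left(e_2)$ and between $right(e_1)$ and $right(e_2)$) so that any weighted topological numbering of $G^*$ assigns the same $x$-coordinate to $e_1$ and $e_2$. Iterating this along each $\pi$ propagates the equality to all edges of $\pi$. Since the paths in $\Pi$ are edge-disjoint, the modifications for different paths do not interfere, and the modified $G^*$ still has only $O(n)$ vertices and edges.

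The next step is to verify that the assignment of $y$-coordinates to bars and $x$-coordinates to edges obtained in this way is a genuine visibility drawing: every edge's vertical segment must lie strictly between the $y$-coordinates of its endpoints and must not overlap or improperly cross other bars. This reduces to the correctness of Tamassia--Tollis, since our only change was to collapse chains of dual vertices, which forces equalities among $x$-coordinates but does not disturb the left-to-right ordering coming from the planar $st$-embedding.

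The main obstacle is to check that the identifications in $G^*$ induced by $\Pi$ do not create a directed cycle, since a cycle would make a topological numbering impossible and collapse distinct edges onto the same vertical line in a way that destroys planarity. This is exactly where the non-intersecting hypothesis is used: two paths in $\Pi$ may touch at a vertex but cannot cross, so the left/right relations they impose on faces of $G$ are mutually consistent with the planar $st$-order. Once this acyclicity is established, linear-time topological numbering on $G$ and on the modified $G^*$ delivers the coordinates, and the $O(n)$ bounds on both ranges give the claimed $O(n^2)$-area drawing in $O(n)$ time.
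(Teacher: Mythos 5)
The paper does not actually prove this lemma: it is quoted as a known result of Di~Battista, Tamassia and Tollis~\cite{DTT92}, so there is no internal proof to compare yours against. On its own terms, your sketch starts from the right place (keep the Tamassia--Tollis $y$-coordinates from a topological numbering of $G$, and modify only the computation of $x$-coordinates via the dual), but the mechanism you propose for forcing alignment does not work as stated, and the one step you correctly identify as the crux is left unproven.

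Concretely: adding a weight-$0$ arc from $left(e_1)$ to $left(e_2)$ in $G^*$ only forces $X(left(e_2))\ge X(left(e_1))$, not equality; adding arcs in both directions creates a directed cycle, so ``weighted topological numbering'' no longer applies and you are really proposing a contraction. But for consecutive edges $e_1=(u,v)$, $e_2=(v,w)$ of a path $\pi$, the faces $left(e_1)$ and $left(e_2)$ are in general distinct (whenever $G$ has other edges incident to $v$ lying between $e_1$ and $e_2$ on the left), and forcing their dual numbers to coincide is both stronger than what alignment requires and potentially infeasible, since $G^*$ may contain a positive-weight directed path between those two faces; you would then have created a positive-weight cycle and no numbering exists. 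The construction in~\cite{DTT92} avoids this: after completing $\Pi$ so that every edge of $G$ lies on exactly one path, one builds an auxiliary acyclic digraph with a \emph{new} node for each path $\pi$, with arcs from every face in $\{left(e):e\in\pi\}$ into $\pi$ and from $\pi$ to every face in $\{right(e):e\in\pi\}$; the number assigned to that node is the common $x$-coordinate of all edges of $\pi$. Proving that this auxiliary digraph is acyclic is precisely where the non-intersection hypothesis enters, and it is the substance of the lemma; you flag it as ``the main obstacle'' but supply no argument, and your claim that edge-disjointness prevents different paths from interfering is not enough, because non-intersecting paths may still touch at vertices and therefore impose constraints on the same faces. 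As it stands, the proposal is a plan for a proof rather than a proof.
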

A {\it $1$-planar drawing} of a graph $G$ is a drawing of $G$ on a two dimensional plane where an edge can be crossed by at most another edge. A graph $G$ is {\it $1$-planar} if $G$ has a 1-planar drawing. A 1-planar graph $G$ is {\it optimal} if no edges can be added to $G$ without losing 1-planarity. That is, an optimal 1-planar graph of $n$ vertices has the highest number of edges among all 1-planar graphs of $n$ vertices. An \textit{outer-1-plane graph} is a topological embedding of a graph such that all
vertices lie on the outer face and there is at most one crossing on each edge. An outer-1-plane graph $G = (V, E)$ is a {\it maximal outer-1-plane graph} if for each
pair $u, v$ of vertices where $(u, v)$ is not an edge, adding the edge $(u, v)$ to $G$ makes
it not outer-1-plane; that is, $G^{\prime} = (V, E \cup {(u, v)})$ is not outer-1-plane for every
drawing of the edge $(u, v)$.



\section{Bar 1-Visibility Drawings of 1-Planar RAC Graphs}
\label{RACPB}
Some interesting labeling properties of diagonal grid graphs have been  studied recently by Selvaraju and  Pricilla~\cite{SEL}. Recently 
Dehkordi showed that outer-1-plane graphs are RAC graphs~\cite{HRD}. In this section we give linear-time algorithms for finding bar 1-visibility drawings of diagonal grid graphs and  maximal outer 1-plane graphs which are RAC drawable graphs. In Section~\ref{DGGraphs} we develop an algorithm for finding a bar 1-visibility drawing of a diagonal grid graph and in Section~\ref{MOGraphs} we develop an algorithm for finding a bar 1-visibility drawing of a maximal outer 1-plane graph.

\subsection{Diagonal Grid Graphs}
\label{DGGraphs}
 In this Section we prove the following theorem. 

\begin{theorem}
A bar 1-visibility drawing of a diagonal grid graph can be drawn in linear-time. 
\end{theorem}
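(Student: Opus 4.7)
The plan is to reduce the problem to the constrained visibility drawing guaranteed by Lemma~\ref{lemma:eq1}. First, I would isolate a maximal planar subgraph $H$ of the diagonal grid graph $G_{p,q}$ by deleting, from each cell, one of its two diagonal edges; say, remove every left-diagonal. The resulting graph $H$ consists of all grid edges together with all right-diagonals, and it is a planar triangulation of the grid. Orienting every edge ``toward the upper-right'' turns $H$ into a planar $st$-graph whose source $s$ is the bottom-left corner vertex and whose sink $t$ is the top-right corner vertex.

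Next, I would apply Lemma~\ref{lemma:eq1} to $H$ with a suitable family $\Pi$ of non-intersecting paths. A natural choice is the collection of right-diagonal paths: for each diagonal ``line'' of cells, the consecutive right-diagonals form a directed path in $H$, and these paths are edge-disjoint and do not cross at common vertices. Forcing these paths to be vertically aligned guarantees that, within each cell $abcd$ (with $a$ the bottom-left and $c$ the top-right vertex), the bars of $a$ and $c$ are stacked on a common vertical line, which also makes the interior geometry of each cell predictable.

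The critical step is to reinsert the removed left-diagonals. For each cell $abcd$, I would argue that the bars of $b$ (bottom-right) and $d$ (top-left) can be connected by a vertical segment lying in the cell's strip, and that this segment crosses exactly one bar of $H$ (namely the bar of $a$ or of $c$) and no other. Verifying this is the main obstacle: one must check that the constrained visibility drawing leaves the interior of every cell free of extraneous bars other than the one on the right-diagonal, and that the $x$-ranges of the bars of $b$ and $d$ overlap sufficiently so that a suitable $x$-coordinate, placed just beside the right-diagonal's $x$-coordinate, is available. A small perturbation of the bar $x$-extents produced by Lemma~\ref{lemma:eq1} should suffice to guarantee this overlap; the perturbation is local to each cell and does not disturb the visibility properties of the original edges of $H$.

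Once the left-diagonal insertion is justified, the algorithm is complete. Since Lemma~\ref{lemma:eq1} produces the base visibility drawing in $O(n)$ time and each of the $O(n)$ cells contributes only $O(1)$ extra work to insert its left-diagonal, the total running time is linear in $n=pq$, as claimed.
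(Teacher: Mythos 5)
Your skeleton is the same as the paper's: delete the left-diagonal of every cell to obtain a planar $st$-graph, build a visibility drawing of it, and then reinsert the left-diagonals. The cosmetic differences (you take the bottom-left and top-right corners as $s$ and $t$ instead of adding dummy vertices; you invoke the constrained visibility drawing of Lemma~\ref{lemma:eq1} with the right-diagonal paths aligned, where the paper uses a plain Tamassia--Tollis drawing with an explicit $y$-numbering) do not matter. What does matter is the step you yourself call ``the main obstacle,'' and that step is not done; it is precisely where the paper's proof does its real work.

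Concretely: in any visibility drawing of the triangulated grid, the right-diagonal $(a,c)$ splits the cell into two faces lying on opposite sides of it, and the bar of the top-left vertex $d$ ends at (before) the column of the face on one side while the bar of the bottom-right vertex $b$ begins at the column of the face on the other side. So $\Gamma(b)$ and $\Gamma(d)$ do \emph{not} overlap in $x$-extent --- there is no $x$-coordinate at which a vertical segment could join them --- and aligning the right-diagonal paths via Lemma~\ref{lemma:eq1} does nothing to change this, since it is forced by the face structure. No ``small perturbation'' local to a cell repairs it: one must create new columns. That is exactly what the paper does: it observes that in its drawing the relevant ends of the two bars are two units apart, inserts one fresh column between the two face-columns flanking every right-diagonal (making the gap three units), extends the bars of the top-left and bottom-right vertices by one and two units so that they meet at the new column, places the left-diagonal there so that it crosses no bar, and notes that the cost of the extension is that the right-diagonal now crosses the extended bar of the top-left vertex --- the single crossing that bar 1-visibility permits. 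One must also check, as the paper implicitly does by performing the insertion uniformly for all cells and recomputing the spacing, that the extensions in neighbouring cells do not introduce further crossings. Until you supply this quantitative insertion-and-extension argument (or an equivalent one), your proof is incomplete at its decisive step.
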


\begin{proof}
Let $G_{p,q}$ be a diagonal grid graph as illustrated in Figure~\ref{figure:bbs4}(a). Clearly $G_{p,q}$  is RAC drawable and 1-planar. We will prove that $G_{p,q}$ is bar 1-visible by constructing a bar 1-visibility drawing of $G_{p,q}$. We first obtain a graph $G$ from $G_{p,q}$ by 
deleting  the left-diagonal edge of each cell. Clearly $G$ is a plane graph as illustrated in Figure~\ref{figure:bbs4}(b). Let $v_{ij}$, $1\le i\le p$ and $1\le j\le q$, be the vertex corresponding to the grid point on the $i$th row and $j$th column of the $p\times q$ grid. We now assign a number $Y(v_{ij})$ to each vertex $v_{ij}$ as follows. We set $Y(v_{1,j})=j$ for $1\le j\le q$, and  for $1<i\le p$
we set $Y(v_{ij})= Y(v_{(i-1)j})+2$. We now construct a directed graph from $G$ by assigning direction to each edge from 
lower number to higher number. We add a vertex $s$ below the first row and a vertex $t$ above the last row. We also add directed edges $(s,v_{1,j})$ and $(v_{p,j},t)$ for $1\le j\le q$. Let $G_{st}$ be the resulting digraph as illustrated in Figure~\ref{figure:bbs4}(c). From the construction one can observe that $G_{st}$ is an $st$-graph. We now construct a visibility drawing of $G_{st}$ as follows
\cite{TT86}.

We first construct $G^*_{st}$ of $G_{st}$ and assign unit weights to the edges of $G^*_{st}$ and compute an optimal weighted
topological numbering $X$ of $G^*_{st}$ as illustrated in Figure~\ref{figure:bbs4}(d). We then draw each vertex $v$ as a horizontal line segment $\Gamma(v)$   at $y$-coordinate $Y(v)$ and between $x$-coordinate $X(left(v))$ and $X(right(v)-1)$. We call $X(left(v))$ the start of
$\Gamma(v)$ and $X(right(v)-1)$ the end of $\Gamma(v)$. For each edge $e$, we draw the vertical line segment $\Gamma(e)$ at 
$x$-coordinate $X(left(e))$, between $y$-coordinate $Y(orig(e))$ and $Y(dest(e))$. Let $u$ be the upper-left vertex and $v$ be the bottom-right vertex of a cell in a diagonal grid graph. Then one can observe from the drawing algorithm that $X(right(v)-1) -X(left(u)) = 2$. 

We now obtain a bar 1-visibility drawing of $G$ from the visibility drawing $\Gamma$ of $G_{st}$ as follows.
  We first delete $\Gamma(s)$ and $\Gamma(t)$ from the drawing together with the drawings of the edges incident to
$s$ and $t$. The visibility drawing $\Gamma$ of $G_{st}$ is illustrated in Figure~\ref{figure:bbs4}(e).
We insert one vertical grid line (column) between the two consecutive vertical grid lines $i$ and $j$ if   $i=left(e)$ and $j=right(e)$ for some right diagonal edge $e$  by expanding the drawing towards $+x$ direction. We perform this insertion operation for every $i,j$.  After this insertion operation the difference of $x$-coordinate of end of $\Gamma(v)$ and start
of $\Gamma(u)$ will be three, where $u$ is the upper-left vertex and $v$ is the bottom-right vertex of a cell in a diagonal grid graph. We thus  extend the end of $\Gamma(v)$ by 2 unit in $+x$-direction and the end of  $\Gamma(u)$ by one unit to the $-x$-direction.  We can place the deleted left diagonal edges in the vertical segment which will be placed between starting point of the horizontal bar corresponding to bottom-right vertex and end point of the extended horizontal segment corresponding to up-left vertex in each cell. By extending these bars, the right diagonal edge in each cell crosses horizontal bar corresponding to up-left vertex in the drawing.  Since all edges including left diagonal edges can be placed at end point and start point of the horizontal bars then only right diagonal edges always pass through one horizontal bar corresponding to the vertex, the drawing becomes a bar 1-visibility drawing. The bar 1-visibility drawing of $G_{p,q}$ is illustrated in Figure~\ref{figure:bbs4}(f).
\end{proof}


\begin{figure}[!htbp]
\centering
\includegraphics[width=0.8\textwidth, height=0.7\textheight]{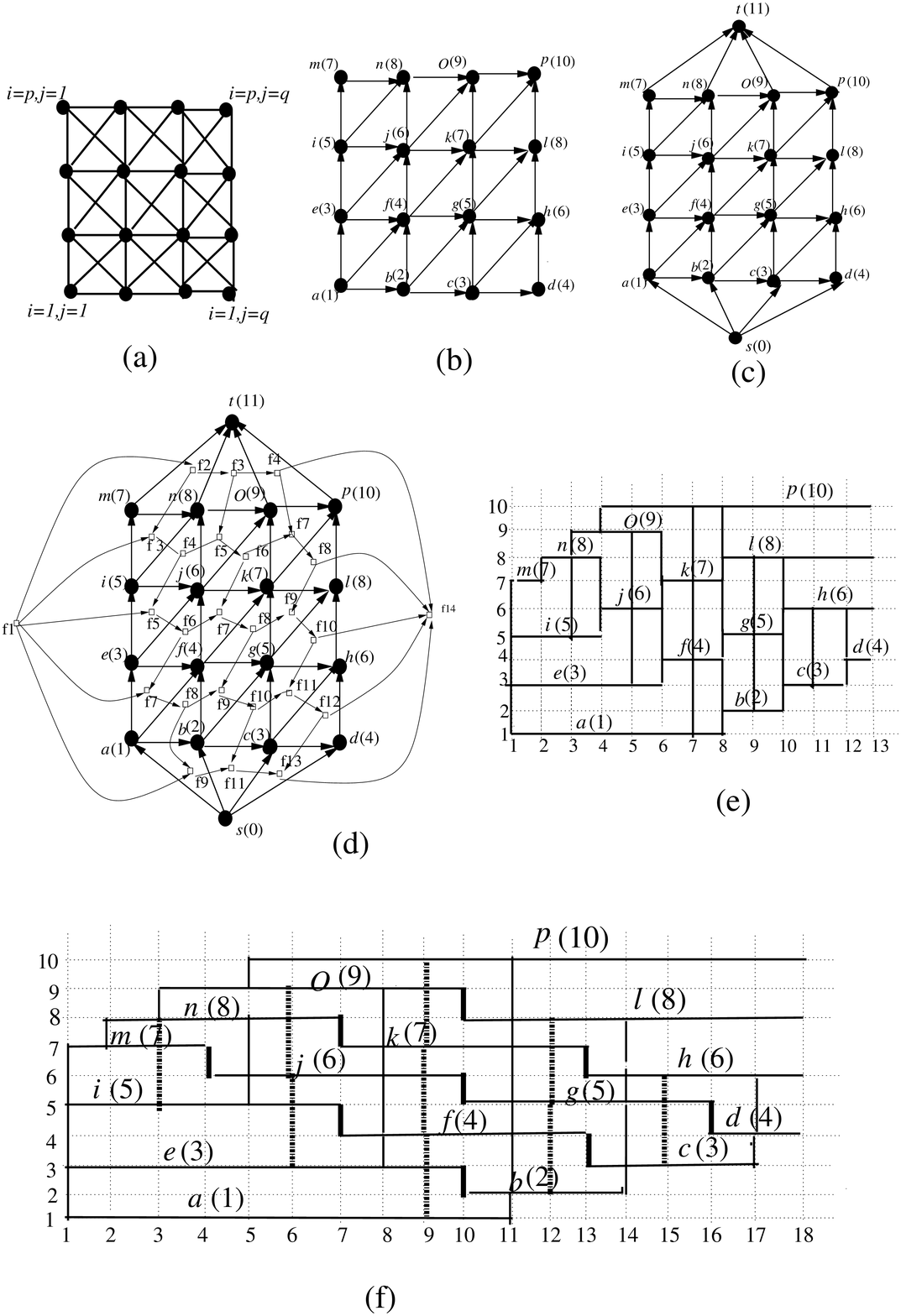}
\caption{(a) Diagonal grid graph $G_{p,q}$, (b) graph $G$ with numbering of vertices and (c) the digraph $G_{st}$, (d) Dual graph $G^*_{st}$, (e) bar visibility drawing and (f) bar 1-visibility drawing of diagonal grid graph $G_{p,q}$.}
\label{figure:bbs4}
\end{figure}

\begin{theorem}
\label{th:dgg} A diagonal grid graph $G_{p,q}$ admits a bar $1$-visibility drawing $\Gamma$  on a grid of size $(q+2p-2)\times(3(p+q)-3)$. Furthermore, $\Gamma$ can be found in linear-time.
\end{theorem}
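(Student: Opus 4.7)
The plan is to lift the explicit grid dimensions directly out of the construction in the proof of the preceding theorem; the linear-time claim follows from that proof at no extra cost, so only the two numerical bounds need dedicated justification.

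The height bound is obtained by unwinding the recursion $Y(v_{1,j})=j$, $Y(v_{ij})=Y(v_{(i-1)j})+2$, which yields $Y(v_{ij})=j+2(i-1)$. Since $1\le i\le p$ and $1\le j\le q$, the $Y$-values assigned to vertices of $G_{p,q}$ lie in $[1,\,q+2p-2]$, so after discarding the $s$- and $t$-bars the drawing occupies exactly $q+2p-2$ distinct $y$-coordinates, which gives the first factor.

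For the width I would trace the three width-affecting stages of the construction separately. First, the Tamassia--Tollis visibility drawing of $G_{st}$ assigns each face the length of a longest directed path from $s^*$ in $G^{*}_{st}$ with unit weights; an inspection of $G^{*}_{st}$, whose internal faces are the two triangles per cell together with the $s$- and $t$-fan triangles, shows that this path length is linear in $p+q$. Second, the algorithm inserts a fresh column for every pair $(\mathrm{left}(e),\mathrm{right}(e))$ arising from a right-diagonal $e$; although there are $(p-1)(q-1)$ right-diagonals in total, I would show that diagonals lying along the same slanting line of $G_{p,q}$ contribute the same pair of dual faces and hence share the inserted column, keeping the number of new columns linear in $p+q$. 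Third, the $+2$ right-extension of the bottom-right bar and the $-1$ shift of the upper-left bar in each cell either fall into columns already produced in the second stage or contribute only a constant number of new columns at the boundary of the drawing. Summing the three contributions yields the target width bound $3(p+q)-3$.

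The main obstacle is the second stage: naively each of the $(p-1)(q-1)$ right-diagonals could trigger its own insertion, which would blow the width up to $\Omega(pq)$. The key lemma to pin down is that any two right-diagonals whose cells are horizontally or vertically adjacent in $G_{p,q}$ produce insertion requests between the same pair of $X$-coordinates, so only $O(p+q)$ distinct columns are ever inserted. Once this is verified, the linear-time claim is immediate, since each step -- the $Y$-assignment, the construction of $G_{st}$ and $G^{*}_{st}$, the weighted topological numbering (computed as longest paths in the acyclic dual), and the subsequent column insertions and bar extensions -- can be carried out in $O(pq)$ time on a graph with $\Theta(pq)$ vertices and edges.
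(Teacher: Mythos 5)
Your proposal follows essentially the same route as the paper: the height is obtained by unwinding $Y(v_{ij})=j+2(i-1)$ to get $q+2p-2$, and the width is the longest path in $G^{*}_{st}$ (which the paper bounds by $2(p+q)$ via the two faces per cell) plus the number of inserted columns, the latter being the crux. The paper simply asserts that at most $(p-1)+(q-1)-1=p+q-3$ new columns are inserted, and your ``same slanting line'' formulation is the right way to substantiate this: the $(p-1)\times(q-1)$ array of cells decomposes into exactly $p+q-3$ diagonals running in the direction of the right-diagonal edges, and cells on one such diagonal request an insertion between the same pair of consecutive $X$-coordinates (moving one cell to the right increases $X(\mathrm{left}(e))$ by $2$ while moving one cell up decreases it by $2$, so the two cancel along a diagonal). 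However, your alternative formulation --- that \emph{horizontally or vertically} adjacent cells produce the same pair of $X$-coordinates --- is false and should be discarded: by transitivity it would collapse all insertions into a single column, and in fact for two horizontally adjacent cells there is a directed path in $G^{*}_{st}$ from the left face of one right-diagonal through its right face to the left face of the next, so their $X$-values differ by at least $2$ (and similarly for vertically adjacent cells). Your third stage is handled as in the paper: after insertion the relevant horizontal gap becomes $3$, and the extensions by $2$ and $1$ units stay within existing columns, so they contribute no additional width. With the diagonal-sharing lemma stated correctly, your argument matches the paper's and in fact supplies the justification the paper leaves to ``one can observe.''
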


\begin{proof}

Let $G_{p,q}$ be a diagonal grid graph. We first obtain a graph $G$ from $G_{p,q}$ by 
deleting  the left-diagonal edge of each cell. Then according to proof of theorem 1, we can get visibility drawing of the graph $G$.  The height of the drawing is $Y(v_{pq})=q+2p-2$. The  width of the drawing is the length of
the longest path in $G^*_{st}$. Since each cell of $G$ contains two faces, the longest path of $G^*_{st}$ is at most 
$2(p+q)$. We now compute the size of bar 1-visibility drawing.  One can observe that at most $(p-1)+(q-1)-1= p+q-3$ new
columns are inserted for constructing bar 1-visibility drawing of $G_{p,q}$ from the visibility drawing of $G$. Thus the width of the bar 1-visibility drawing is $(3(p+q)-3)$. Thus we can obtain a  bar $1$-visibility drawing on a grid of size $(q+2p-2)\times(3(p+q)-3)$.  

\end{proof}

A diagonal grid graph $G$ has $n=pq$ vertices. If $ p>q$ then the area of the bar 1-visibility drawing 
 is $O(p^2)$ and if $ p<q$ then the area is $O(q^2)$ . When $ p\cong q$ then the area of the
 bar $1$-visibility drawing  is $O(pq)= O(n)$. The bar 1-visibility drawing obtained by our algorithm is 
``compact" in a sense that there is at least one line segment for every vertical and horizontal grid line except the last
vertical grid line.

\subsection{Maximal Outer 1-Planar Graphs}
\label{MOGraphs}
  In this Section we give an algorithm for obtaining a bar 1-visibility drawing of a
maximal outer 1-plane graph. This problem has an interesting correlation with a constrained visibility drawing of a planar $st$-graph. 
To describe the algorithm, we have need some definitions.

Let $G$ be a maximal outer 1-plane graph. In a maximal outer 1-plane graph, each edge crossing is surrounded by a cycle of length 4 and no edge in this cycle has a crossing~\cite{HRD}. We call such a cycle of length four a {\it quadrangle}. An edge $e$ of an outer-1-plane graph is called a  {\it crossing edge} if $e$ has a crossing; otherwise $e$ is called a {\it non-crossing edge}. In the Figure~\ref{figure:Dlg1}(a), $a,b,c,d$ and $b,c,e,f$ are quadrangles. In  quadrangle $a,b,c,d$, edges $(a,c)$ and $(b,d)$ are  crossing edges and edges $(a,b)$, $(b,c)$, $(c,d)$ , $(d,a)$ are non-crossing edges.
The crossing edges of a quadrangle are called {\it diagonal} of the quadrangle.

We call a labeling of vertices of a maximal outer 1-plane graph by integers 1 to $n$ (where $n$ is the number of vertices in the graph)  a \textit{diagonal labeling} if vertices of each quadrangle got numbers in such a way that two ends of a diagonal of a quadrangle get the lowest and highest numbers among the numbers assigned to the four vertices of the quadrangle. 

For example, Figure~\ref{figure:Dlg1}(a) shows
an input maximal outer 1-plane graph $G$ and Figure~\ref{figure:Dlg1}(b) shows a
diagonal labeling of $G$. 


 We can introduce diagonal labeling on the maximal outer 1-plane graph. We have the following lemma.

\begin{lemma}
\label{lemma:eq}
  Every maximal outer 1-plane graph admits diagonal labeling.
\end{lemma}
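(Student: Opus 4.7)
I plan to prove the lemma by constructing a diagonal labeling through a single sweep along the outer cycle of $G$. The key structural fact I rely on is that the four vertices of every quadrangle lie in four consecutive positions on the outer face: since all vertices lie on the outer face, while the 4-cycle surrounding a crossing bounds a region that contains only the crossing and no other vertex, no outer-face vertex can lie strictly between two corners of a quadrangle. Denote the outer vertices by $v_1, \ldots, v_n$ in cyclic order and enumerate the quadrangles in their order of first appearance as $Q_1, \ldots, Q_m$ with starting positions $p_1 < \cdots < p_m$. Because each edge of a 1-planar drawing is crossed by at most one other edge, two quadrangles cannot share two consecutive edges (this would force one diagonal to be crossed by two distinct partners); hence $p_{j+1} - p_j \ge 2$, with equality exactly when $Q_j$ and $Q_{j+1}$ share a single edge. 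It follows that the quadrangle overlap structure decomposes into disjoint linear chains along the outer cycle (no cyclic wrap-around is possible, since otherwise two non-crossing closing chords of quadrangles would have to cross).

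The plan is a greedy sweep with a running counter $c$. A vertex not belonging to any quadrangle receives the next free label. When a new quadrangle $Q$ at positions $(i, i+1, i+2, i+3)$ is reached with $v_i$ and $v_{i+1}$ still unlabeled, I assign labels $(c, c+1, c+3, c+2)$ to $(v_i, v_{i+1}, v_{i+2}, v_{i+3})$, so that the diagonal $(v_i, v_{i+2})$ carries $(c, c+3) = (\min, \max)$ of $Q$'s four labels (call this \emph{Case~A}). If instead $Q$ shares its left edge with the preceding quadrangle, so that $v_i$ and $v_{i+1}$ are already labeled, I switch to \emph{Case~B}: assign the two fresh labels $c, c+1$ to $v_{i+2}, v_{i+3}$ with the larger going to $v_{i+3}$, so the diagonal $(v_{i+1}, v_{i+3})$ realizes the min (inherited on $v_{i+1}$) and the max (freshly placed on $v_{i+3}$) of $Q$'s four labels. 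Cases~A and B therefore alternate along each maximal edge-sharing chain, and the sweep restarts with Case~A at the beginning of each new chain.

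The main obstacle is proving that this alternation is globally consistent. I would establish by induction along each chain two symmetric invariants: after a Case~A block the two vertices shared with the next quadrangle carry the block's two middle labels with the smaller middle on the right endpoint $v_{p_j+3}$; after a Case~B block the mirrored statement holds on the left endpoint. Because the counter advances by exactly the number of fresh labels placed, the two new labels introduced at the next step always strictly exceed every inherited label in the new quadrangle's four-vertex set. This ensures that the minimum of the new quadrangle is inherited on the shared edge (at precisely the vertex dictated by the alternation rule) while the maximum is freshly placed on the opposite endpoint of the chosen diagonal. Verifying both transitions $A \to B$ and $B \to A$ reduces to a short calculation over at most six consecutive integer labels, which closes the induction. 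Every quadrangle then has one of its two diagonals carrying the min and max of its four-label set, which is exactly the diagonal labeling condition.
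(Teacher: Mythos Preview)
Your argument rests on the structural claim that the four vertices of every quadrangle occupy four \emph{consecutive} positions on the outer cycle. This is false. The 4-cycle surrounding a crossing does bound a region containing only that crossing, but one of its edges may be a non-crossing \emph{chord} of the outer cycle, and then any number of outer vertices may sit on the far side of that chord, strictly between two corners of the quadrangle along the outer face.

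A concrete counterexample on six vertices in convex position $v_1,\dots,v_6$: take the outer cycle, the non-crossing chord $(v_1,v_4)$, the crossing pair $(v_1,v_3),(v_2,v_4)$, and the crossing pair $(v_1,v_5),(v_4,v_6)$. One checks that no edge is crossed twice and that no further edge can be inserted without creating a double crossing, so the drawing is maximal outer-1-plane. The quadrangle around the second crossing is $v_1v_4v_5v_6$, whose edges $(v_1,v_4),(v_4,v_5),(v_5,v_6),(v_6,v_1)$ are all non-crossing; yet $v_1$ and $v_4$ are \emph{not} adjacent on the outer face---$v_2$ and $v_3$ lie between them. Your sweep, which assigns labels to quadrangle corners at positions $i,i+1,i+2,i+3$, cannot handle this quadrangle at all, and the subsequent claims ($p_{j+1}-p_j\ge 2$, decomposition into linear chains, the Case~A/B alternation) all collapse with the premise.

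The paper's proof avoids this by ignoring outer-face position entirely: it runs a BFS-style numbering from an arbitrary start vertex, at each step labelling neighbours reached by non-crossing edges before neighbours reached by crossing edges. In any quadrangle the first-labelled vertex $u$ sees the opposite corner $w$ through a crossing edge and the two adjacent corners through non-crossing edges, so $w$ is labelled last among the four; thus the diagonal $uw$ carries the minimum and maximum. This argument uses only the local incidence pattern of a quadrangle, not any global consecutiveness on the outer face.
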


\begin{proof}
 Let $G$ be a maximal outer 1-plane graph. We will prove that $G$ has a diagonal labeling. Let $v$ be a vertex in $G$. Then assign 1 to $v$. After that we will give next numbers to those vertices which are incident to non-crossing edges from $v$ in counterclockwise order. Then assign numbers to the vertices which are incident to crossing edges from $v$. We then consider the vertex labeled by 2 and assign next numbers to the vertices in the same way. Since the labeling has done always in increasing order and the vertices incident to crossing edges are labeled later, so the diagonal of a quadrangle got the highest and lowest numbers among the numbers of four vertices of the quadrangle. Figure~\ref{figure:Dlg1}(b) illustrates a
diagonal labeling of the maximal outer 1-plane graph in Figure~\ref{figure:Dlg1}(a). 
\end{proof}

Using this diagonal labeling, we can construct a bar 1-visibility drawing of a maximal outer 1-plane graph as mentioned in the following theorem.

\begin{theorem}
A bar 1-visibility drawing of a maximal outer 1-plane graph can be drawn in linear-time.
\end{theorem}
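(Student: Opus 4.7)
The plan is to reduce the problem to constructing a constrained visibility drawing of a planar $st$-graph (Lemma~\ref{lemma:eq1}) and then to reinsert the crossing edges as vertical segments, each of which pierces exactly one bar. All the ingredients we need are already in the paper: diagonal labeling (Lemma~\ref{lemma:eq}), the quadrangle structure around every crossing, and the constrained visibility machinery.

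First, invoke Lemma~\ref{lemma:eq} to obtain a diagonal labeling $1,2,\dots,n$ of the vertices of $G$. Let $G'$ be the plane graph obtained from $G$ by deleting, from every quadrangle, the diagonal whose endpoints are the two middle-labeled vertices; equivalently, $G'$ keeps the diagonal that joins the vertex of smallest label to the vertex of largest label in each quadrangle. Orient every remaining edge from its lower-labeled endpoint to its higher-labeled endpoint. Because the labeling is a topological order and all vertices of $G$ lie on the outer face, $G'$ with this orientation is a plane $st$-graph whose source is the vertex labeled $1$ and whose sink is the vertex labeled $n$.

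Next, build a set $\Pi$ of non-intersecting paths in $G'$ that we will ask Lemma~\ref{lemma:eq1} to align vertically. For every quadrangle with low vertex $a$, high vertex $d$ and middle vertices $b,c$ (where, w.l.o.g., $b$ lies in the face $abd$ of $G'$ and $c$ lies in the face $acd$), put the two directed edges $(a,b)$ and $(b,d)$ into $\Pi$, forming the path $a\to b\to d$. Since each quadrangle in a maximal outer $1$-plane graph has the bounded region of its outer $4$-cycle used only by its two crossing edges~\cite{HRD}, distinct quadrangles contribute edge-disjoint paths, and these paths cross at common vertices only in the non-forbidden ``touching'' sense allowed by Lemma~\ref{lemma:eq1}. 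Apply the lemma to obtain in $O(n)$ time a constrained visibility drawing $\Gamma'$ of $G'$ in which, for every quadrangle, the bars $\Gamma'(a)$, $\Gamma'(b)$, $\Gamma'(d)$ share a common $x$-coordinate $x^*$ and appear in that vertical order.

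Finally, reinsert the deleted diagonals. Fix a quadrangle $abcd$ as above. By construction, the non-crossing edges $(a,c)$ and $(c,d)$ of $G'$ are already drawn as vertical segments in $\Gamma'$, so $\Gamma'(c)$ is incident to two vertical segments whose $x$-coordinates lie strictly to the left or right of $x^*$; by slightly extending $\Gamma'(c)$ horizontally, if necessary (as in the proof for diagonal grid graphs, by inserting at most one extra column between the two column coordinates bordering $b$), we can guarantee that $\Gamma'(c)$ covers $x^*$. Draw the deleted diagonal $(b,c)$ as the vertical segment at $x^*$ between $y(b)$ and $y(c)$. This segment has both endpoints on the correct bars; moreover it passes through exactly one bar that is not an endpoint, namely $\Gamma'(a)$ if $y(c)<y(a)$ or $\Gamma'(d)$ if $y(c)>y(d)$, and through no other bar, because in the $st$-graph $G'$ all vertices whose bars live in the horizontal strip strictly between $y(b)$ and $y(c)$ belong either to the face $abd$ or to the face $acd$ and their bars were placed strictly to one side of $x^*$ by the constrained visibility drawing. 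The expansion and reinsertion are local to each quadrangle and cost $O(1)$ per quadrangle, so the whole procedure runs in $O(n)$ time.

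The main obstacle is the last step: one must argue that after the column insertions no two reinserted diagonals end up at the same $x$-coordinate in a way that would force a vertical segment to cross two bars. This is handled by the fact that quadrangles in a maximal outer $1$-plane graph are internally disjoint and that each quadrangle's reinsertion is performed in a fresh column, so the one-bar-crossing property is preserved globally.
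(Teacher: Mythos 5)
Your overall strategy is the same as the paper's (diagonal labeling, planarize by removing one diagonal per quadrangle, apply the constrained visibility drawing of Lemma~\ref{lemma:eq1}, then reinsert the removed diagonals), but you have inverted the key choice of \emph{which} diagonal to remove, and the reinsertion step breaks as a result. The paper removes the min--max diagonal $(a,d)$ and realizes it as the constrained path $a\to b\to d$ through one middle vertex: since $y(a)<y(b)<y(d)$ in the resulting drawing, the edge $(a,d)$ can then be drawn as a single vertical segment along the aligned column, crossing exactly the bar $\Gamma(b)$, while the other diagonal $(b,c)$ stays in the planar graph $G'$ and is drawn with no crossing at all. You instead keep $(a,d)$ in $G'$ and try to reinsert $(b,c)$. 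This fails concretely in three places. First, your one-bar-crossing analysis is vacuous: because $a$ and $d$ carry the minimum and maximum labels of the quadrangle and the $y$-coordinates respect the labeling, $y(a)<y(c)<y(d)$ always holds, so neither of your cases ``$y(c)<y(a)$'' nor ``$y(c)>y(d)$'' can occur, and you never actually identify which bar the segment for $(b,c)$ crosses. Second, placing that segment at the column $x^*$ of the aligned path is impossible: the segments $\Gamma'(a,b)$ and $\Gamma'(b,d)$ already occupy that column and together cover the entire interval $[y(a),y(d)]\supset[\min(y(b),y(c)),\max(y(b),y(c))]$, so the new segment would overlap an existing edge segment, which is not a legal visibility drawing. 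Third, in your $G'$ the vertices $b$ and $c$ lie in the two different faces $abd$ and $acd$ separated by the retained edge $(a,d)$, so their bars occupy horizontally disjoint ranges on opposite sides of the column of $\Gamma'(a,d)$; connecting them is not the local ``insert one fresh column'' operation you describe, and if you extend $\Gamma'(b)$ across the column of $(a,d)$ to reach $c$'s side, you force the edge $(a,d)$ to cross $\Gamma'(b)$ --- a crossing your argument never accounts for (this is, in effect, the paper's construction resurfacing, but your proof does not track it).

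A smaller but real gap: you assert that the paths contributed by distinct quadrangles are edge-disjoint because the quadrangles' interiors are disjoint, but your paths consist of \emph{boundary} edges of the quadrangles, and adjacent quadrangles share a non-crossing side (e.g., $(b,c)$ is shared by the quadrangles $abcd$ and $bcef$ in Figure~\ref{figure:Dlg1}(a)). Non-intersection of the path family has to come from a consistent choice --- the paper routes every rerouted diagonal through the middle vertex on its \emph{right} side precisely to guarantee this --- not from interior-disjointness.
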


\begin{proof}
 
Let $G$ be a maximal outer 1-plane graph as illustrated in Figure~\ref{figure:Dlg1}(a). By Lemma~\ref{lemma:eq}, $G$ has a diagonal labeling as illustrated in Figure~\ref{figure:Dlg1}(b). We first give direction to every edge from lower number to higher number as illustrated in Figure~\ref{figure:Dlg1}(b). We next identify the crossing edge containing the highest number and the lowest number in each quadrangle. We next construct a planar graph $G^{\prime}$ from $G$ by passing the crossing edge containing the highest number and the lowest number through the vertex which are right side of the edge in each quadrangle as illustrated in Figure~\ref{figure:Dlg1}(c). The crossing edge which is passed through the vertex is identified as a non-intersecting path. Since we pass all the crossing edges through the vertices which are right side of the edges, these satisfies the conditions of non-intersecting paths stated in the result on constrained visibility drawing~\cite{DTT92}. In the graph illustrated in Figure~\ref{figure:Dlg1}(c), more than one sink vertices are found. We next construct planar $st$-graph by adding dummy edges between the sink vertices to the highest labeled sink vertex as illustrated in Figure~\ref{figure:Dlg1}(d). Since the graph is outer planar, the obtained graph remains planar after adding dummy edges.  
We next construct a constrained visibility drawing of this planar $st$-graph according to Lemma~\ref{lemma:eq1}~\cite{DTT92}. All edges can be placed at end point and start point of the horizontal bars. Since crossing edges containing the highest number and the lowest number for all quadrangles always pass through one horizontal bar corresponding to the vertex, the drawing becomes a bar 1-visibility drawing. Since every step of the algorithm can be done in linear-time, a bar 1-visibility drawing of a maximal outer 1-plane graph can be drawn in linear-time~\cite{DTT92}. The bar 1-visibility drawing of $G$ is illustrated in Figure~\ref{figure:Dlg1}(e).
\end{proof}

\begin{figure}[!htbp]
\centering
\includegraphics[width=0.7\textwidth, height=0.4\textheight]{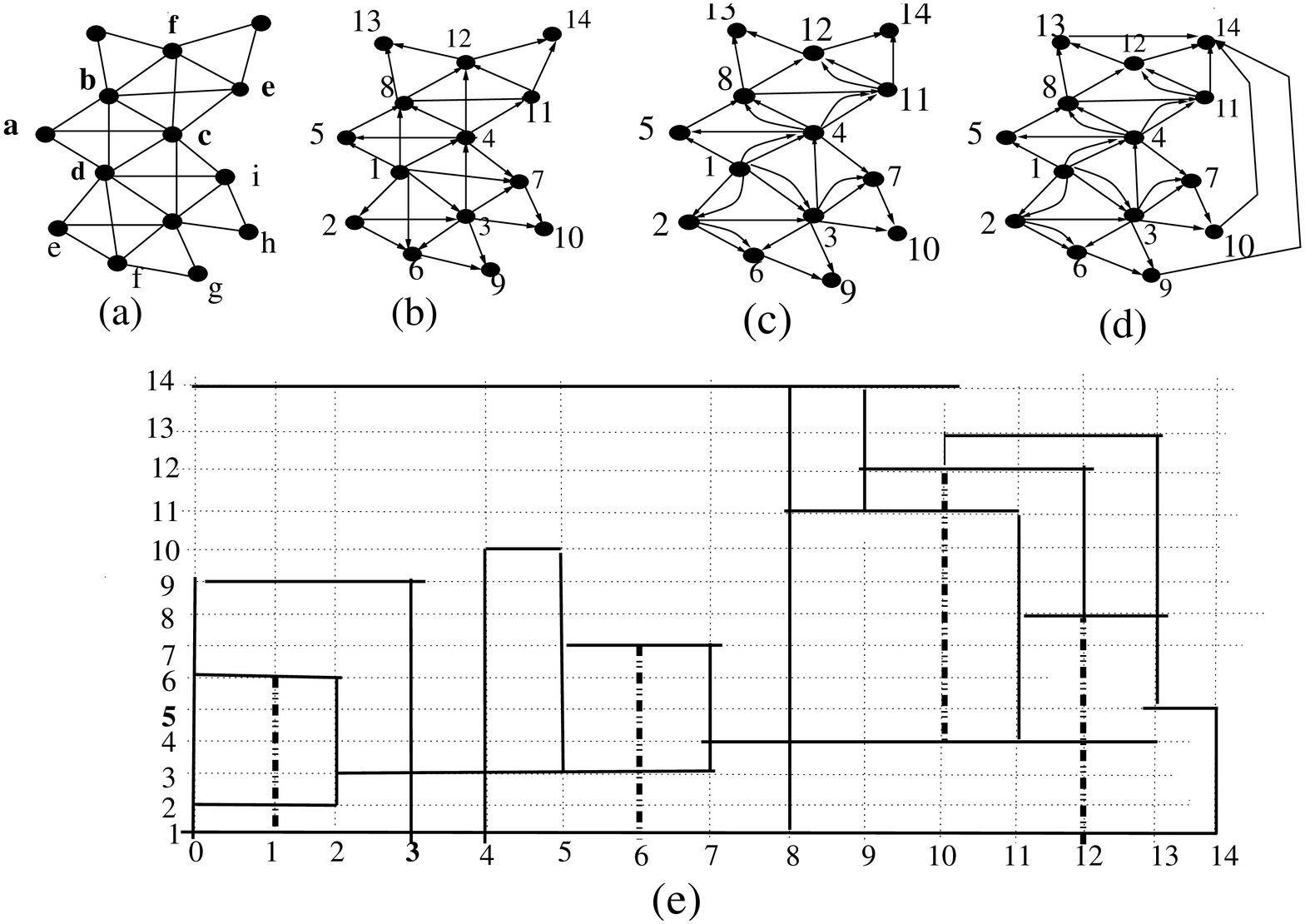}
\caption{(a) A maximal outer 1-plane graph $G$, (b) a diagonal labeling of $G$, (c) planar graph $G^{\prime}$, (d) planar $st$-graph and (e) bar 1-visibility drawing of $G$ }
\label{figure:Dlg1}
\end{figure}

\section{Bar 1-Visibility Drawings of 1-Planar Non-RAC Graphs}
\label{planar and bar}
In the previous section, we showed that diagonal grid graphs and maximal outer 1-planar graphs are two classes of 1-planar graphs.
Recently Eades and Liotta showed that every maximally dense RAC graph is 1-planar \cite{EL12}; on the other hand, they  introduced a class of 1-planar graphs which is not RAC drawable. In this section  we call the class recursive quadrangle 1-planar graphs. Suzuki studied the combinatorial properties of the optimal 1-planar graphs having $4n-8$ edges~\cite{S10} which we will define as ``pseudo double wheel 1-planar graphs". Since a RAC graph with $n>3$ vertices has at most $4n-10$ edges \cite{DEL09}, pseudo double wheel 1-planar graphs are not RAC graphs. In this section we show that recursive quadrangle 1-planar graphs and pseudo double wheel 1-planar graphs are bar 1-visible graphs. 
In Section~\ref{Recursive} we develop an algorithm for finding a bar 1-visibility drawing of a recursive quadrangle 1-planar graph and in Section~\ref{Pseudo} we develop an algorithm for finding a bar 1-visibility drawing of a pseudo double wheel 1-planar graph.

 \subsection{Recursive Quadrangle 1-Planar Graphs }
\label{Recursive}
The class recursive quadrangle 1-planar graph is defined recursively, as follows. $G_0$ is a 1-planar graph of eight vertices . $G_0$ has an 1-planar drawing as illustrated in Figure~\ref{figure:bar1visibility}(a) where the cycle $abcd$ is drawn as the outer rectangle and none of the four edges on the outer rectangle  has a crossing.
Let $abcd$ be the outer rectangle of $G_i$, $i\ge 0$. Graph $G_{i+1}$ is obtained from $G_i$ by adding a new outer rectangle $a^\prime b^\prime c^\prime d^\prime$ and  16 new edges as described 
in Figure~\ref{figure:bar1visibility}(c) and (e), where the four edges on the the outer face do not have any crossing.  

Let $G$ be a 1-planar graph and $x$ be a vertex of $G$ and $(x,y)$ be an edge of $G$. We denote by $\Gamma(G)$, $\Gamma(x), \Gamma(x,y)$ a bar 1-visibility drawing of  $G$, the drawing of a vertex $x$ as a horizontal bar in $\Gamma(G)$, and the drawing of a  an edge $(x,y)$ as a vertical line segment in $\Gamma(G)$, respectively.  For a vertex $x$ in $G$, let $a$ and $b$ be the $x$-coordinates of the two ends of $\Gamma(x)$ such that $a<b$. We call $a$ and $b$ the {\it left end} and the {\it right end} of $\Gamma(x)$, respectively. We now have the following theorem.


\begin{theorem}
\label{th:rq}
Every recursive quadrangle 1-planar graph $G_i,i\ge0$ is a bar 1-visible graph.
\end{theorem}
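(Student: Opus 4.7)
The plan is to proceed by induction on $i$, building $\Gamma(G_{i+1})$ by a local modification of $\Gamma(G_i)$ that only touches the current outermost rectangle $abcd$.

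For the base case, I would exhibit an explicit bar 1-visibility drawing of $G_0$ (eight vertices) by hand: place the four outer-rectangle bars $\Gamma(a),\Gamma(d)$ on the left and $\Gamma(b),\Gamma(c)$ on the right at two separated $y$-levels, and stack the four inner vertices between them so that the four non-crossing edges of the rectangle $abcd$ are drawn at the extreme left and right $x$-coordinates, while the internal crossing pairs each use a single extra bar as their permitted non-endpoint intersection. This can be read off directly from the intended drawing accompanying Figure~\ref{figure:bar1visibility}(a), and one checks by inspection that every vertical segment meets at most one bar that is not an endpoint.

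For the inductive step, assume $\Gamma(G_i)$ is a bar 1-visibility drawing in which the outer rectangle $abcd$ appears with $\Gamma(a),\Gamma(d)$ as the leftmost bars and $\Gamma(b),\Gamma(c)$ as the rightmost bars (this invariant is maintained throughout the induction). To produce $\Gamma(G_{i+1})$ I would first extend $\Gamma(a),\Gamma(b),\Gamma(c),\Gamma(d)$ outward in the $x$-direction to make room for the new outer rectangle. Next, I would place $\Gamma(a^\prime)$ and $\Gamma(d^\prime)$ further to the left of $\Gamma(a),\Gamma(d)$, and $\Gamma(b^\prime),\Gamma(c^\prime)$ further to the right of $\Gamma(b),\Gamma(c)$, arranging their $y$-coordinates so that $\Gamma(a^\prime)$ is slightly below $\Gamma(a)$, $\Gamma(d^\prime)$ slightly above $\Gamma(d)$, and similarly on the right. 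The four non-crossing outer-rectangle edges $(a^\prime,b^\prime),(b^\prime,c^\prime),(c^\prime,d^\prime),(d^\prime,a^\prime)$ are then drawn at the extreme $x$-coordinates, preserving the inductive invariant for the next round. The remaining $12$ new edges go between $\{a^\prime,b^\prime,c^\prime,d^\prime\}$ and $\{a,b,c,d\}$ as prescribed by Figure~\ref{figure:bar1visibility}(c),(e); each such edge is routed as a vertical segment whose $x$-coordinate sits in one of the narrow strips I created on the left and right of the previous drawing.

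The main obstacle I expect is verifying the ``at most one non-endpoint bar'' condition for the $12$ new cross-edges joining the old outer rectangle to the new one, since each new vertex has degree four across these edges and several of them must cross one bar of $\{a,b,c,d\}$ in order to reach an inner partner. The key point in the verification is that because $\Gamma(a),\Gamma(b),\Gamma(c),\Gamma(d)$ are the extremal bars of $\Gamma(G_i)$, any vertical segment drawn in the newly created left or right strip meets \emph{only} bars of $\{a,d\}$ or of $\{b,c\}$, and by choosing the $y$-order of $\Gamma(a^\prime),\Gamma(d^\prime),\Gamma(b^\prime),\Gamma(c^\prime)$ as above, each such segment can be placed so that it passes through at most one of these two bars that is not its own endpoint. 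Finally, none of the edges already drawn in $\Gamma(G_i)$ acquire new crossings, because the $x$-extensions of $\Gamma(a),\Gamma(b),\Gamma(c),\Gamma(d)$ only lengthen bars at columns where no old vertical segment of $G_i$ lives. Consequently $\Gamma(G_{i+1})$ satisfies the bar 1-visibility condition, completing the induction.
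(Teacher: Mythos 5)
Your overall strategy --- induction on $i$, peeling off the outer rectangle, extending the old extremal bars to make room, and attaching $a',b',c',d'$ outside --- is the same as the paper's. But your geometric invariant is essentially different, and it is where the argument breaks. The paper keeps the four outer-rectangle bars as the four \emph{vertically} extremal bars (from bottom to top: $\Gamma(a)$, $\Gamma(d)$, \dots, $\Gamma(b)$, $\Gamma(c)$), with an additional condition on the $x$-coordinates of their left ends, and then places all four new bars strictly above and below the entire previous drawing. You instead split the rectangle \emph{horizontally} ($a,d$ leftmost; $b,c$ rightmost) and place $\Gamma(d')$ ``slightly above $\Gamma(d)$'' and symmetrically for $b'$ --- i.e.\ at $y$-levels interior to the old drawing. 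This creates two concrete problems. First, $d'$ is adjacent to $c$ and $c'$, which live on the far side near the top; either $\Gamma(d')$ must span the old drawing horizontally to reach them, in which case every old vertical segment from $\Gamma(a)$ or $\Gamma(d)$ up to an inner bar now pierces $\Gamma(d')$ and gains a crossing (your ``no new crossings'' claim only covers the $x$-extensions of \emph{old} bars, not the insertion of \emph{new} bars at interior $y$-levels), or else the edges $(d',c)$ and $(d',c')$ must be routed in a side strip across many $y$-levels. Second, an edge such as $(a',b)$ runs from the new bottom-left bar past the $y$-levels of both $\Gamma(a)$ and $\Gamma(d)$; since by your own invariant these are the leftmost bars, they are present at every $x$-coordinate of the left strip, so that segment crosses two non-endpoint bars. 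You correctly identify this verification as ``the main obstacle,'' but the resolution is asserted rather than established, and with the left/right configuration as described it cannot go through.

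The fix is essentially to adopt the paper's configuration: maintain (i) the vertical order $a,d,\dots,b,c$ with $a'$, $d'$ added below and $b'$, $c'$ added above the whole drawing, and (ii) staggered left/right ends of the old outer bars so that the far-left and far-right columns of the extended drawing contain only the (at most two) old extremal bars a new edge is permitted to cross. With that invariant each of the twelve cross edges has a column in which it meets at most one non-endpoint bar, and no old edge is disturbed because nothing new is inserted at an interior $y$-level.
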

\begin{proof}
We prove the claim by induction on $i$. Let $abcd$ be the outer rectangle of $G_0$, as illustrated in Figure~\ref{figure:bar1visibility}(a). Then $G_0$ has a bar 1-visibility drawing, as illustrated in Figure~\ref{figure:bar1visibility}(b) where (i) $\Gamma(a)$ is the bottommost bar, $\Gamma(c)$ is the topmost bar, $\Gamma(d)$ is the second bottommost bar and $\Gamma(b)$ is the second topmost bar; (ii) starts of $\Gamma(c)$ and $\Gamma(d)$ have the smallest $x$-coordinate of the drawing and starts of $\Gamma(a)$ and $\Gamma(b)$ have the second smallest $x$-coordinate of the drawing; and (iii) $\Gamma(a,b)$ crosses $\Gamma(d)$.
 We assume that $i>0$ and $G_j$, for  $j<i$, has a bar visibility drawing satisfying (i)-(iii) above. We now 
show that $G_i$ has a bar 1-visibility drawing satisfying (i)-(iii) above. Let $a^\prime b^\prime c^\prime d^\prime$ be the outer rectangle of $G_i$.
We obtain a graph $G_{i-1}$ by deleting the four vertices on 
the outer rectangle  of $G_i$. Let $abcd$ be the outer rectangle of $G_{i-1}$.  
By induction hypothesis, $G_{i-1}$ has a bar 1-visibility drawing satisfying (i)-(iii) 
as illustrated in Figure~\ref{figure:bar1visibility}(d).
We now obtain a bar 1-visibility drawing of $G_i$ from the visibility drawing  of $G_{i-1}$ as follows. 
Extend the left ends of $\Gamma(a)$,  $\Gamma(d)$ and $\Gamma(b)$ by four, four and two units, respectively,  to the $-x$-direction. Extend the right ends of $\Gamma(c)$ and $\Gamma(b)$ by five unit each to the $+x$-direction.
Draw $\Gamma(a^\prime), \Gamma(d^\prime), \Gamma(b^\prime)$ and $\Gamma(c^\prime)$ as the bottommost, 2nd bottommost, 2nd topmost and  topmost bars outside $\Gamma(G_{i-1})$ and draw the new edges as vertical line segments such that (i)-(iii) are satisfied in $\Gamma(G_i)$, as illustrated in Fig.~\ref{figure:bar1visibility}(f).
\end{proof}

\begin{figure}[!htbp]
\centering
\includegraphics[scale=.3]{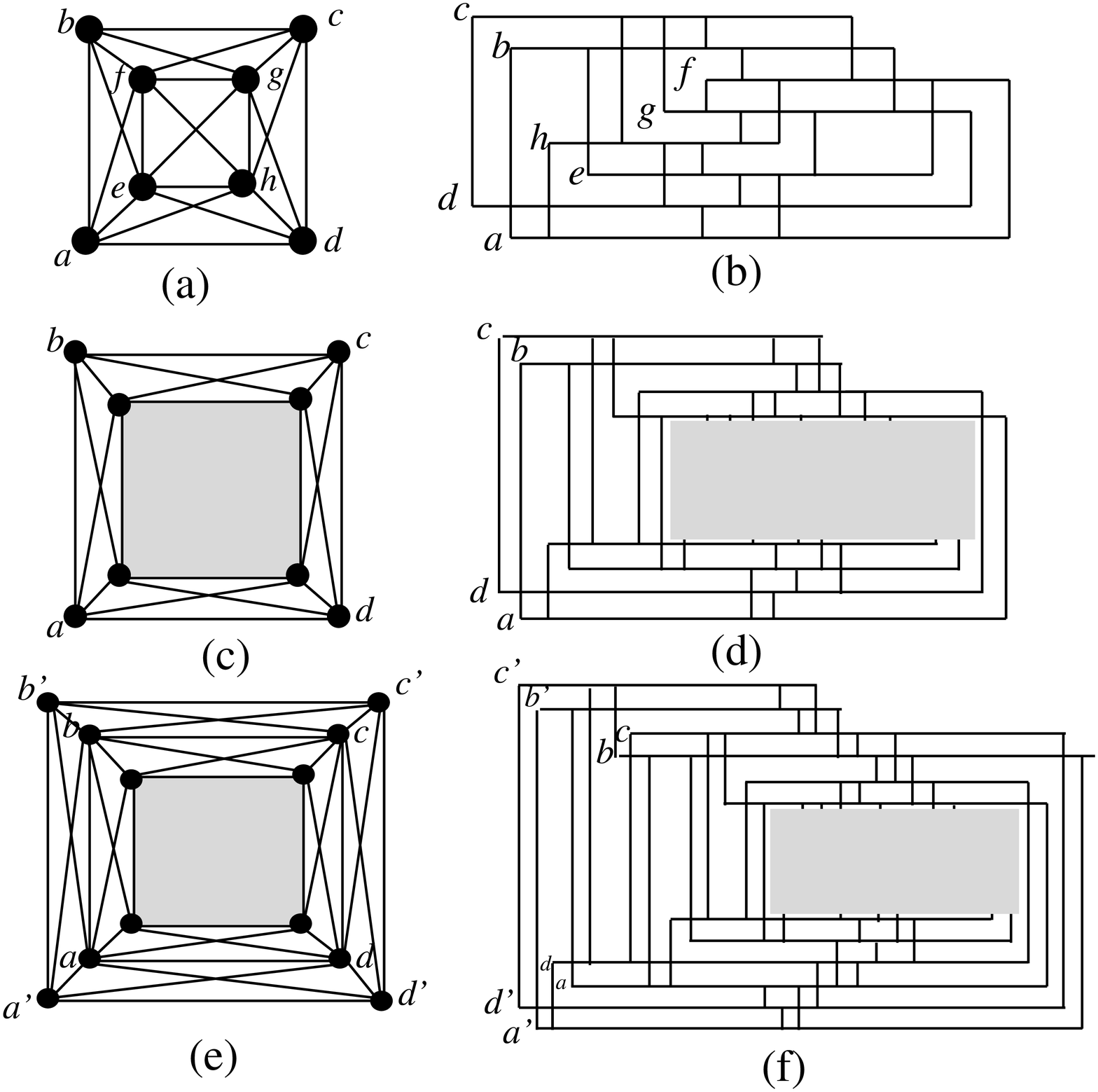}
\caption{Illustration for the proof of Theorem~\ref{th:rq}}
\label{figure:bar1visibility}
\end{figure}

By adding two more edges to recursive quadrangle 1-planar graph $G$ of $4p$, $p\ge 2$ vertices,  we can obtain an optimal 1-planar graph $G^\prime$ of $4p$, $p\ge 2$ vertices. From the bar 1-visibility drawing of $G$ we can obtain a bar 1-visibility 
drawing of $G^\prime$ by adding the drawing of additional two edges as vertical line segment, as illustrated in 
Figure~\ref{figure:bar1optimal}. Thus the following theorem holds.

\begin{figure}[!htbp]
\centering
\includegraphics[scale=.3]{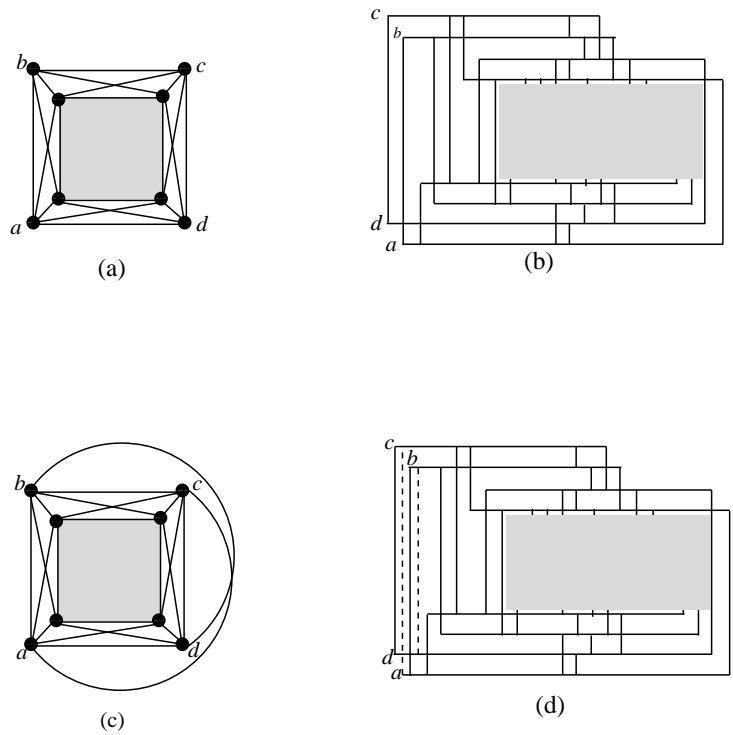}
\caption{(a) A recursive quadrangle 1-planar graph $G$, (b) a bar 1-visibility drawing of $G$, (c) an optimal 1-planar graph $G^\prime$ obtained from $G$ by adding two edges and (d) a  bar 1-visibility drawing of $G^\prime$.}
\label{figure:bar1optimal}
\end{figure}

\begin{theorem}
Every optimal 1-planar graph obtained from recursive quadrangle 1-planar graph by adding two edges is a bar 1-visible graph.
\end{theorem}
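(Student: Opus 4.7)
The plan is to build directly on Theorem~\ref{th:rq}: start from the bar 1-visibility drawing $\Gamma(G)$ of the underlying recursive quadrangle 1-planar graph $G$ and argue that the two extra edges needed to make $G'$ optimal can always be realized as vertical line segments without violating bar 1-visibility.

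First, I would invoke Theorem~\ref{th:rq} to obtain a drawing $\Gamma(G)$ that satisfies the invariants (i)--(iii) from its proof. In particular, the four vertices $a^\prime, b^\prime, c^\prime, d^\prime$ of the outermost rectangle of $G$ appear as the bottommost, second topmost, topmost, and second bottommost bars respectively; the left ends of $\Gamma(a^\prime), \Gamma(d^\prime), \Gamma(b^\prime)$ and the right ends of $\Gamma(c^\prime), \Gamma(b^\prime)$ stick out beyond the inner drawing by several units. These stick-out regions are exactly the ``free strips'' in which I intend to route the two new edges.

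Next, I would identify the two missing edges that turn $G$ into the optimal 1-planar graph $G^\prime$. Since a recursive quadrangle 1-planar graph on $4p$ vertices has $16p - 10$ edges and an optimal 1-planar graph has $4(4p)-8 = 16p-8$ edges, exactly two edges are missing, and inspection of the recursive construction shows they lie on the outer rectangle quadrangle as its two diagonals $(a^\prime, c^\prime)$ and $(b^\prime, d^\prime)$ (as visible in Figure~\ref{figure:bar1optimal}(c)). I would then place $\Gamma(a^\prime, c^\prime)$ as a vertical line segment in the extended left strip at an $x$-coordinate that lies within the overlap of $\Gamma(a^\prime)$ and $\Gamma(c^\prime)$ but to the left of every other bar except $\Gamma(d^\prime)$, so that the segment crosses only $\Gamma(d^\prime)$; symmetrically, I would place $\Gamma(b^\prime, d^\prime)$ in the extended right strip at an $x$-coordinate within the overlap of $\Gamma(b^\prime)$ and $\Gamma(d^\prime)$, crossing only $\Gamma(c^\prime)$. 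The unit-by-unit extensions provided by Theorem~\ref{th:rq} guarantee enough horizontal room to do this.

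Finally, I would check that each of the two new vertical segments crosses exactly one bar that is not an endpoint, so the bar 1-visibility condition remains satisfied, and that no previously drawn vertical segment is disturbed. The main obstacle is the bookkeeping on $x$-coordinates: one must verify that the specific extensions prescribed in the proof of Theorem~\ref{th:rq} leave free vertical columns in the outer strips that meet precisely two bars from the outer rectangle (the endpoints) and exactly one of the remaining outer bars. Once this is confirmed, together with the observation that the inner drawing is untouched, the bar 1-visibility of $\Gamma(G^\prime)$ follows immediately.
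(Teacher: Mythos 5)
Your proposal is essentially the paper's own argument: the paper likewise takes the drawing of $G$ guaranteed by Theorem~\ref{th:rq} and simply inserts the two missing outer-rectangle diagonals as vertical segments (citing Figure~\ref{figure:bar1optimal}), and your version actually supplies more detail (the edge count $16p-10$ versus $16p-8$, the identification of the two edges, and where to route them). Two small bookkeeping slips worth fixing but not affecting the approach: the segment for $(b^\prime,d^\prime)$ cannot cross $\Gamma(c^\prime)$, since $\Gamma(c^\prime)$ is the topmost bar and lies outside that segment's $y$-range, and the extensions prescribed in Theorem~\ref{th:rq} do not by themselves leave a column where both endpoints' bars are present with at most one intervening bar (e.g.\ $\Gamma(d^\prime)$ is never extended to the right), so one must additionally lengthen the relevant outer bars by one unit into a fresh column before placing each diagonal.
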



\subsection{Pseudo Double Wheel 1-Planar Graphs }
\label{Pseudo}
Let $C$ be a cycle $v_1$$u_1$$v_2$$u_2$...$v_n$$u_n$ of even number of vertices  embedded on a plane. Let $x$ and $y$ be two vertices outside and inside of $C$, respectively. We add $x$ with $u_i$ and $y$ with $v_i$ for $i=1...n$. Let $H$ be the resulting plane graph, as illustrated in Figure~\ref{figure:w2}(a). We add a pair of crossing edges to each face of $H$, as illustrated in Figure~\ref{figure:w2}(b). The resulting graph is an optimal 1-planar graph as introduced by Suzuki~\cite{S10}. We call this optimal 1-planar graph {\it even pseudo double wheel 1-planar graph}.

\begin{figure}[!htbp]
\centering
\includegraphics[scale=.4]{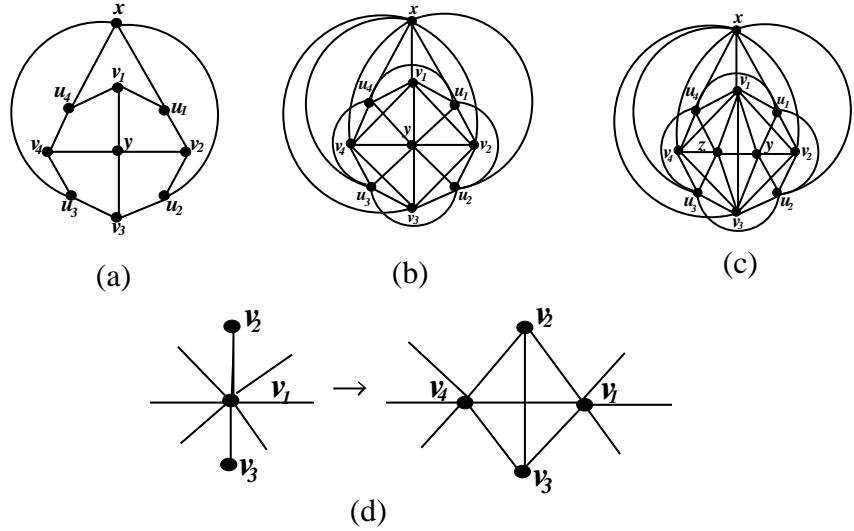}
\caption{(a) Pseudo double wheel, (b) even pseudo double wheel 1-planar graph $G$, (c) odd pseudo double wheel 1-planar graph $G^\prime$ and (d) $Q_v$ splitting.} 
\label{figure:w2}
\end{figure}

{\it $Q_v$ splitting} is an expansion operation at {\it $v_1$} defined as follows : (i) Identify {\it $v_2$} and {\it $v_3$} such that there are {\it $v_1$$v_2$} and {\it $v_1$$v_3$} edges but no {\it $v_2$$v_3$} edge. (ii) Split $v_1$$v_2$$v_3$ path (iii) Rename one copy of $v_1$ as $v_4$. (iv) Join $v_1$ , $v_4$ and $v_2$ , $v_3$. Splitting is illustrated in Figure~\ref{figure:w2}(d). An optimal 1-planar graph obtained from even pseudo double wheel 1-planar graph by one splitting operation is called {\it odd pseudo double wheel 1-planar graph} as illustrated in Figure~\ref{figure:w2}(c).
 We now have the following theorem. 

\begin{theorem}
\label{th:opt}
Every pseudo double wheel 1-planar graph is a bar 1-visible graph.
\end{theorem}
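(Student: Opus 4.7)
The plan is to handle the two subfamilies of pseudo double wheel 1-planar graphs separately: first the even case by an explicit bar 1-visibility drawing, and then the odd case by showing that one $Q_v$ splitting preserves bar 1-visibility. As preparation, note that once the face-diagonals of the underlying pseudo double wheel $H$ are added, the vertex $y$ becomes adjacent to every $u_i$ in addition to every $v_i$, the vertex $x$ becomes adjacent to every $v_i$ in addition to every $u_i$, and the remaining diagonals $(v_i, v_{i+1})$ and $(u_i, u_{i+1})$ induce two new ``rim cycles'' $v_1 v_2 \ldots v_n v_1$ and $u_1 u_2 \ldots u_n u_1$ on the cycle vertices.

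For the even case, I would draw $\Gamma(y)$ as the bottommost bar and $\Gamma(x)$ as the topmost bar, both spanning the full horizontal extent, and draw the $2n$ cycle bars at strictly increasing heights between them in the order $v_1, u_1, v_2, u_2, \ldots, v_n, u_n$. The horizontal range of the bar at height $i$ would be a short interval of a fixed width, shifted by a small constant per unit height, so that each cycle bar overlaps horizontally with exactly its immediate and skip-one neighbors in the vertical ordering. Under this layout the edges of $G$ fall into four groups, each drawable as a vertical segment: the $4n$ spoke edges from $y$ and $x$ are placed in columns of the target cycle bar that no other cycle bar reaches, crossing no intermediate bar; the $2n$ edges of $C$ connect vertically adjacent cycle bars as trivial vertical segments; and the $2n$ rim-cycle diagonals $(v_i, v_{i+1})$ and $(u_i, u_{i+1})$ each connect skip-one cycle bars and cross exactly the one cycle bar at the intermediate height. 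The main obstacle is balancing the horizontal range assignment, since each cycle bar must be wide enough to provide an overlap column with its skip-one neighbor for the rim-cycle diagonal, while still leaving, within its own extent, two free sub-columns outside the ranges of the vertically adjacent bars to host the incoming $y$-spoke and $x$-spoke. A direct calculation with bars of width three and unit vertical shifts makes this precise.

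For the odd case, let $G'$ be obtained from an even pseudo double wheel 1-planar graph $G$ by one $Q_v$ splitting at a vertex $w$ with distinguished neighbors $a$ and $b$, producing a new vertex $w'$ and adding edges $(w, w')$ and $(a, b)$. Starting from the bar 1-visibility drawing of $G$ constructed above, I would split $\Gamma(w)$ horizontally into two short bars $\Gamma(w)$ and $\Gamma(w')$ placed at slightly different heights within the strip originally occupied by $\Gamma(w)$, reassign the original edge $(w, b)$ so that it attaches to $\Gamma(w')$ instead, draw $(w, w')$ as a short vertical segment in the narrow horizontal overlap between the two new bars, and insert $(a, b)$ as a vertical segment in the newly freed horizontal space so that it crosses at most the bar $\Gamma(w)$ or $\Gamma(w')$. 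A local verification that every edge's vertical segment still crosses at most one non-endpoint bar then completes the proof.
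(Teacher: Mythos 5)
Your even-case layout is essentially the paper's staircase construction (the paper stacks the hubs as the two topmost bars so the $y$-spokes each cross $\Gamma(x)$, while you put $y$ and $x$ at the two extremes so the spokes cross nothing; either variant is fine), but as stated it has a concrete hole: you only account for edges joining bars that are adjacent or skip-one in the vertical order, and the cycle $C$ and the two rim cycles are \emph{closed}. The three wrap-around edges $(u_n,v_1)$, $(v_n,v_1)$ and $(u_n,u_1)$ join the bars at the very top of the cycle stack to those at the very bottom, and under your explicit stipulation that each cycle bar overlaps horizontally only with its immediate and skip-one neighbors, these three edges cannot be drawn as vertical segments at all. This is exactly the point where the paper's proof does extra work: it treats $v_1,u_1,v_n,u_n$ specially, extending their bars past the staircase so that $(v_1,u_1)$, $(v_1,v_n)$, $(v_1,u_n)$, $(v_n,u_n)$ are drawn with no crossing and $(u_1,u_n)$ crosses only $\Gamma(v_n)$. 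Your construction needs an analogous (nested) extension of those four bars in fresh columns; without it the argument is incomplete, though the fix is routine.

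The odd case is a genuinely different route from the paper (the paper gives a second explicit global drawing with three hub bars $z,y,x$ rather than performing surgery on the even drawing), but your sketch does not work as written. First, you misstate $Q_v$ splitting: the operation duplicates the path through the two distinguished neighbors (both copies $w$ and $w'$ end up adjacent to $a$ and $b$), adds $(w,w')$ and $(a,b)$, and redistributes \emph{all} remaining edges of $w$ between the two copies; merely reassigning one edge $(w,b)$ adds two edges too few and does not match the definition. Second, and more seriously, in the odd pseudo double wheel the split vertex is a hub ($y$ is split into $y$ and $z$, each keeping roughly half of the $2n$ spokes, and the new edge is $(v_i,v_j)$ between two cycle vertices). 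In your staircase drawing $\Gamma(v_i)$ and $\Gamma(v_j)$ can be far apart in the vertical order and share no column, so $(v_i,v_j)$ cannot be inserted by a local modification; likewise, splitting a full-width hub bar and re-homing half its spokes is not a local operation, and the rim edge that previously crossed $\Gamma(w)$ risks crossing both new bars. A correct treatment either redoes the global construction for the odd graph (as the paper does, arranging the bars so that the spokes of $y$ and $z$ each cross only $\Gamma(x)$ and the edge $(v_i,v_j)$ is drawable) or proves a genuine ``splitting preserves bar 1-visibility'' lemma with the edge distribution and the new $(a,b)$ edge handled explicitly; your current paragraph does neither.
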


\begin{proof}
We will prove every even pseudo double wheel 1-planar graph and odd pseudo double wheel 1-planar graph  is a bar 1-visible graph. 

Let $G$ be an even pseudo double wheel 1-planar graph. We will prove that $G$ is a bar 1-visible graph by constructing a bar 1-visibility drawing of $G$. We first  draw $n$ bars where $1st$ topmost bar is $y$, $2nd$ topmost bar is $x$ and other n-2 bars starting from $3rd$ topmost bar are $v_1$, $u_1$, $v_2$, $u_2$...$v_n$, $u_n$ respectively. We next draw $n-2$ vertical lines from  $v_1$, $u_1$, $v_2$, $u_2$...$v_n$, $u_n$ to $y$ each crossing $x$ and another $n-2$ vertical lines from $v_1$, $u_1$, $v_2$, $u_2$...$v_n$, $u_n$ to $x$ without crossing any bar. We next join each of the $v_2$, $u_2$, $v_3$, $u_3$...$v_{n-1}$, $u_{n-1}$ bars to 4 bars 1 and 2 unit up and below itself. We next consider $v_1$, $u_1$, $v_n$ and $u_n$. We next join $v_1$ to $u_1$, $v_1$ to $v_n$, $v_1$ to $u_n$ and $v_n$ to $u_n$ by vertical lines without crossing any bar. At last we draw a vertical line from $u_1$ to  $u_n$ crossing $v_n$. Since each vertical line crosses at most one bar, the drawing becomes bar 1-visibility drawing. The bar 1-visibility drawing of $G$ is illustrated in Figure~\ref{figure:w1}(a).

Let $G^{\prime}$ be an odd pseudo double wheel 1-planar graph of $n$ vertices. We will prove that $G^{\prime}$ is a bar 1-visible graph by constructing a bar 1-visibility drawing of $G^{\prime}$. We first find $v_i$ and $v_j$ on C such that $v_i$ and $v_j$ have degree 8. We next draw $n$ bars where 1st topmost bar is $z$, 2nd topmost bar is $y$, 3rd topmost bar is $x$ and other $n-3$ bars starting from 4th topmost bar are $v_1$, $u_1$, $v_2$, $u_2$...$v_n$, $u_n$ respectively. We next draw $n-3$ vertical lines from $v_1$, $u_1$, $v_2$, $u_2$...$v_n$, $u_n$ to $x$ without crossing any bar. Then join $y$ to $v_i$,$u_i$, $v_{i+1}$, $u_{i+1}$...$v_j$ by vertical lines. These lines cross bar $x$. We next draw vertical lines from $z$ to $v_j$,$u_j$, $v_{j+1}$, $u_{j+1}$...$v_n$, $u_n$, $v_1$, $u_1$...$v_j$ crossing bar $x$. We next join each of the $v_2$, $u_2$, $v_3$, $u_3$...$v_{n-1}$, $u_{n-1}$ bars to 4 bars 1 and 2 unit up and below itself. Now we consider $v_1$, $u_1$, $v_n$ and $u_n$. We next join $v_1$ to $u_1$, $v_1$ to $v_n$, $v_1$ to $u_n$ and $v_n$ to $u_n$, $y$ to $z$ by vertical lines without crossing any bar. We next draw a vertical line from $u_1$ to  $u_n$ crossing $v_n$. At last we join $v_i$ to $v_j$. Since each vertical line crosses at most one bar, the drawing becomes bar 1-visibility drawing.  The bar 1-visibility drawing of $G^{\prime}$ is illustrated in Figure~\ref{figure:w1}(b). 

\end{proof}

\begin{figure}[!htbp]
\centering
\includegraphics[scale=.3]{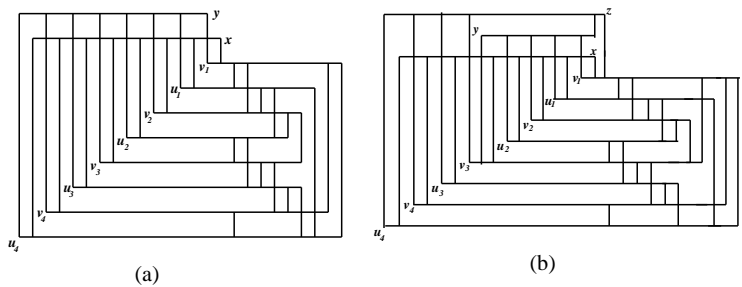}
\caption {(a) Bar 1-visibility drawing of even pseudo double wheel 1-planar graph and (b) bar 1-visibility drawing of odd pseudo double wheel 1-planar graph.} 
\label{figure:w1}
\end{figure}

\section{Conclusion}
\label{conclusion}

In this paper we give linear-time algorithms to find bar 1-visibility drawings of diagonal grid graphs and maximal outer 1-planar graphs which are RAC drawable graphs.  We also developed algorithms for finding bar 1-visibility drawings of recursive quadrangle 1-planar graphs and pseudo double wheel 1-planar graphs which are not RAC drawable graphs.

Pach and Toth \cite{PT97} proved that a 1-planar graph can have at most $4n-8$ edges whereas Dean {\it et al.}~\cite{DEGLST07} showed that a bar 1-visible graph can have at most 
$6n-20$ edges. The bound for bar 1-visible graph is tight since for each $n\ge 8$ there exists a bar 1-visible graph with exactly $6n-20$ edges ~\cite{DEGLST07}. Thus not all bar 1-visible graphs are 1-planar graphs. This can be well illustrated by the following example. A bar 1-visibility drawing of an  optimal 1-planar graph $G$ of eight vertices in 
Figure~\ref{figure:8vertex}(a), is shown in Figure~\ref{figure:8vertex}(b), but Figure~\ref{figure:8vertex}(c) shows a bar 1-visibility drawing of a graph of eight vertices which has  more edges. Suzuki~\cite{S10} proved that every optimal 1-planar graph can be obtained from a even pseudo double wheel 1-planar graph by a sequence of $Q_v$ splittings and ``$Q_4$ additions".
We were able to construct bar 1-visibility drawing of every  1-planar graph  that we considered as an example, but yet to find a formal proof. We thus conjecture as follows.
\begin{figure}[!htbp]
\centering
\includegraphics[scale=.2]{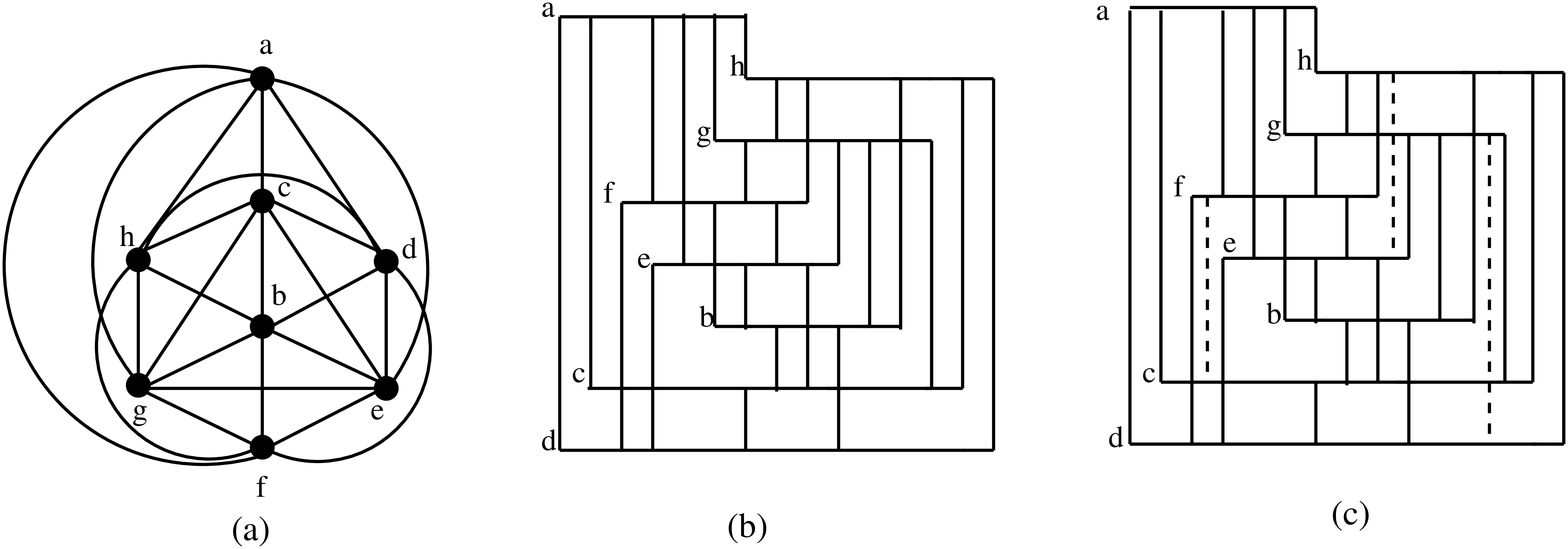}
\caption{(a) An optimal 1-planar graph $G$, (b) a bar 1-visibility drawing of $G$ and (c) a bar 1-visibility drawing of $G^\prime$ obtained by adding some edges to $G$.}
\label{figure:8vertex}
\end{figure}

\begin{conjecture}
Every 1-planar graph is a bar 1-visible graph.
\end{conjecture}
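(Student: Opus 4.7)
The plan is to reduce the conjecture to a manageable subclass and then induct on a structural characterization. Observe first that bar 1-visibility is monotone under edge deletion: erasing a vertical segment from a bar 1-visibility drawing yields a valid drawing for the graph minus that edge. Hence it suffices to establish the conjecture for every edge-maximal 1-planar graph on $n$ vertices, and in particular I would try the optimal case first, since Suzuki's generation theorem (cited in the conclusion) is available there: every optimal 1-planar graph arises from an even pseudo double wheel by a finite sequence of $Q_v$ splittings and $Q_4$ additions.

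For the optimal case I would induct along Suzuki's sequence. Theorem~\ref{th:opt} gives the base. The induction hypothesis would be strengthened to carry an \emph{extendability invariant} — namely, that around every quadrangle of the current graph the drawing reserves some horizontal room together with a distinguished pair of adjacent bar positions available for later insertion. With such an invariant, one should be able to realize $Q_v$ splitting (locally splitting a vertex $v_1$ into $v_1, v_4$ and introducing the crossing edge $(v_2, v_3)$) and $Q_4$ addition (inserting a 4-cycle together with its two diagonals inside an existing face) by modifying only a local region of the drawing: insert the new bar(s) into the reserved slot, reroute the affected vertical segments, and verify that each new edge still meets at most one bar that is not its endpoint. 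The bookkeeping is fussy but appears manageable, since both operations are inherently local.

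For the general (non-optimal) case I would first try augmentation: embed the given maximal 1-planar graph $G$ as a spanning subgraph of some optimal 1-planar graph $G^\star$ by adding edges consistently with the fixed 1-planar embedding, apply the previous step to $G^\star$, and then delete the auxiliary vertical segments from the drawing. The main obstacle, and presumably the reason the conjecture is still open, is that such an augmentation is not always available: the density gap $4n-8 - |E(G)|$ counts edges but gives no structural guarantee that the missing edges can actually be inserted in a 1-planar way. As a backup, I would attempt a direct geometric construction from any 1-planar embedding of $G$: build a bar visibility drawing of the planarization $P(G)$ (which is a genuine plane graph), then absorb each crossing-vertex bar into a suitably chosen neighboring vertex bar so that the two crossing edges of that site each cross exactly one real bar.

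The crux, and what I expect to be the main obstacle, is this absorption step: the choice of which neighboring vertex to merge each dummy crossing-vertex into must be made consistently at all crossings simultaneously, and the resulting vertical segments must all still cross at most one bar. A disciplined framework — perhaps a $2$-coloring of crossings indicating "merge up" or "merge down", constrained by the structure of $P(G)$ — would be needed, and proving that such a coloring always exists for an arbitrary 1-planar embedding is, in my view, where the genuine difficulty of the conjecture lies.
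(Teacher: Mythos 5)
This statement is a \emph{conjecture}: the paper explicitly states that the authors could construct bar 1-visibility drawings for every 1-planar example they tried but have no formal proof, so there is no proof in the paper to compare against. Your proposal does not close that gap either --- it is a research programme with the decisive steps left open, and you say as much yourself. Concretely: (1) the ``extendability invariant'' that would let you carry a bar 1-visibility drawing through a $Q_v$ splitting or a $Q_4$ addition is never formulated precisely, let alone shown to be preservable; this is exactly the inductive step one would need, and nothing in the paper (whose Theorem~\ref{th:opt} handles only the base case, the pseudo double wheels themselves) supplies it. (2) The reduction from maximal to optimal 1-planar graphs via augmentation fails in general, as you concede: a maximal 1-planar graph need not be a spanning subgraph of an optimal one, and the edge-count gap gives no embedding-level guarantee. (3) The ``absorption'' of crossing vertices in the planarization into neighbouring bars requires a globally consistent choice at all crossings simultaneously, and you offer no argument that such a choice exists. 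Each of (1)--(3) is precisely where the open problem lives, so the proposal restates the difficulty rather than resolving it.

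Two smaller remarks. Your opening observation --- that bar 1-visibility is monotone under edge deletion, so it suffices to treat edge-maximal 1-planar graphs --- is correct and is a legitimate (if standard) reduction; it is consistent with the paper's own strategy of attacking dense subclasses (recursive quadrangle and pseudo double wheel graphs) first. But be careful not to conflate ``edge-maximal 1-planar'' with ``optimal'': an optimal 1-planar graph has exactly $4n-8$ edges, whereas a maximal 1-planar graph can have far fewer, and Suzuki's generation theorem covers only the optimal class. As written, your first paragraph slides from ``every edge-maximal 1-planar graph'' to ``the optimal case'' without flagging that the former class is strictly larger, which is the very gap your third paragraph then has to confront.
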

Several interesting open problems have come out from this works.

\begin{enumerate}
\item Recognition of both RAC graphs and 1-planar graphs are NP-complete problems. It is interesting to know the complexity of recognizing a bar 1-visible graph. Finding a complete characterization of bar 1-visible graph is also an interesting 
open problem.
\item How to find a 1-planar embedding of a 1-planar graph?
\item Can we find a complete characterization of bar $k$-visibility drawing?
\end{enumerate}

\bibliographystyle{splncs03}
 \bibliography{barvisibility}

\begin{thebibliography}{10}
\providecommand{\url}[1]{\texttt{#1}}
\providecommand{\urlprefix}{URL }

\bibitem{ABS11}
Argyriou, E.N., Bekos, M.A., Symvonis, S.: { The straight-line rac drawing
  problem is NP-hard}. In: The Proceedings of SOFSEM'11. Lecture Notes in
  Computer Science, vol. 6543, p. 74–85. Springer-Verlag (2011)

\bibitem{AFKMT10}
Arikushi, K., Fulek, R., Keszegh, B., Moric, F., T\'oth, C.D.: { Graphs that
  admits right angle crossing drawing}. In: The Proceedings of WG'10. Lecture
  Notes in Computer Science, vol. 6410, p. 135–146. Springer-Verlag (2010)

\bibitem{DTT92}
Battista, D., Tamassia, R., Tollis, I.G.: Constrained visibility representation
  of graphs. In: Inform. Process. Letters. vol.~41, pp. 1--7 (1992)

\bibitem{DEGLST07}
Dean, A.M., Evans, W., Gethner, E., Laison, J.D., Safari, M.: {Bar k-Visibility
  Graphs}. Journal of Graph Algorithms and Applications  11(1),  45--59 (2007)

\bibitem{HRD}
Dehkordi, H.R.: On Algorithmic Right Angle Crossing Graph Drawing. Thesis,
  MPhil, The University of Sydney (August, 2012)

\bibitem{DETT99}
{Di Battista}, G., Eades, P., Tamassia, R., Tollis, I.G.: Graph Drawing:
  Algorithms for the Visualization of Graphs. Prentice-Hall Inc. (1999)

\bibitem{DEL09}
Didimo, W., Eades, P., Liotta, G.: {Drawing graphs with right angle crossings}.
  In: The Proceedings of WADS'09. Lecture Notes in Computer Science, vol. 5664,
  p. 206–217. Springer-Verlag (2009)

\bibitem{DEL10}
Didimo, W., Eades, P., Liotta, G.: {A characterization of complete bipartite
  rac graphs}. Information Processiong Letters  110(16),  687--691 (2010)

\bibitem{EL12}
Eades, P., Liotta, G.: {Right angle crossing graphs and 1-planarity}. In: The
  Proceedings of Graph Drawing'11. Lecture Notes in Computer Science, vol.
  7034, p. 206–217. Springer-Verlag (2012)

\bibitem{FM07}
Fabrici, I., Madaras, T.: {The structure of 1-planar graphs}. Discrete
  Mathematics  307(7-8),  854--865 (2007)

\bibitem{KM08}
Korzhik, V.P., Mohar, B.: { Minimal obstructions for 1-immersions and hardness
  of 1-planarity testing}. In: The Proceedings of Graph Drawing'08. Lecture
  Notes in Computer Science, vol. 5417, p. 302–312. Springer-Verlag (2008)

\bibitem{NR04}
Nishizeki, T., Rahman, M.S.: Planar Graph Drawing. Lecture notes series on
  computing, World Scientific (2004)

\bibitem{OV78}
Otten, J., Wijk, J.G.V.: {Graph Representation in Interactive Layout Design}.
  In: The Proceedings of IEEE Intternational Symposium On Circuits and Systems.
  pp. 914--918 (1978)

\bibitem{PT97}
Pach, J., T\'oth, G.: {Graphs drawn with few crossings per edge}. Combinatorica
   17(3),  427--439 (1997)

\bibitem{SEL}
Selvaraju, P., Pricilla, B.: {On Cordial Labeling: The Grid, Diagonal Grid,
  Structured Web Graphs}. International Journal of Algorithms, Computing and
  Mathematics  2(1),  5--14 (2009)

\bibitem{S10}
Suzuki, Y.: {Optimal 1-planar graphs which triangulate other surfaces}.
  Discrete Mathematics  310(1),  6--11 (2010)

\bibitem{TT86}
Tamassia, R., Tollis, I.G.: {A Unified Approach to Visibility Representations
  of Planar Graphs}. Discrete and Computational geometry  1,  321--341 (1986)

\end{thebibliography}



\end{document}